\newcommand{\OPT}{\text{{\sc opt}}}
\newcommand{\opt}{\OPT}
\newcommand{\MPS}{{\rm MPS}} 
\newcommand{\MPSO}{{\rm MPS$_{\rm opt}$}} 
\newcommand{\eps}{\varepsilon}
\begin{document}

\title{Online Makespan Minimization with Parallel Schedules\vspace*{-0.3cm}}
\author{Susanne Albers \and Matthias Hellwig}
\institute{{Department of Computer Science, Humboldt-Universit\"at zu Berlin}
\email{\{albers,mhellwig\}@informatik.hu-berlin.de}\vspace*{-0.5cm}}

\maketitle

\begin{abstract}

Online makespan minimization is a classical problem in which a sequence of jobs $\sigma = J_1, \ldots, J_n$
has to be scheduled on $m$ identical parallel machines so as to minimize the maximum completion time of any job.
In this paper we investigate the problem with an essentially new model of resource augmentation.
More specifically, an online algorithm is allowed to build several schedules in parallel while processing $\sigma$. 
At the end of the scheduling process the best schedule is selected. This model can be viewed as 
providing an online algorithm with extra space, which is invested to maintain multiple solutions. The 
setting is of particular interest in parallel processing environments where each processor can maintain a 
single or a small set of solutions. 

As a main result we develop a $(4/3+\eps)$-competitive algorithm, for any $0<\eps\leq 1$, that uses a 
constant number of schedules. The constant is $1/\eps^{O(\log (1/\eps))}$. We also give 
a $(1+\eps)$-competitive algorithm, 
for any $0<\eps\leq 1$, that builds a polynomial number of $(m/\eps)^{O(\log (1/\eps) / \eps)}$ 
schedules. This value depends on $m$ but is independent of the input $\sigma$. The performance guarantees
are nearly best possible. We show that any algorithm that achieves a competitiveness smaller than $4/3$
must construct $\Omega(m)$ schedules. Our algorithms make use of novel guessing schemes that (1)~predict
the optimum makespan of a job sequence $\sigma$ to within a factor of $1+\eps$ and (2)~guess the job processing 
times and their frequencies in $\sigma$. In~(2) we have to sparsify the universe of all guesses so
as to reduce the number of schedules to a constant.

The competitive ratios achieved using parallel schedules are considerably smaller than
those in the standard problem without resource augmentation. Furthermore they are at least as good and
in most cases better than the ratios obtained with other means of resource augmentation for makespan
minimization.

\end{abstract}

\section{Introduction}
Makespan minimization is a fundamental and extensively studied problem in scheduling theory. Consider a
sequence of jobs $\sigma = J_1, \ldots, J_n$ that has to be scheduled on $m$ identical parallel
machines. Each job $J_t$ is specified by a processing time $p_t>0$, $1\leq t \leq n$. Preemption of 
jobs is not allowed. The goal is to minimize the makespan, i.\,e.\ the maximum completion time of
any job in the constructed schedule. We focus on the online version of the problem where the jobs
of $\sigma$ arrive one by one. Each incoming job $J_t$ has to be assigned immediately 
to one of the machines without knowledge of any future jobs $J_{t'}$, $t'>t$.

Online algorithms for makespan minimization have been studied since the 1960s. In an early paper
Graham~\cite{G} showed that the famous {\em List\/} scheduling algorithm is $(2-1/m)$-competitive.
The best online strategy currently known achieves a competitiveness of about 1.92. Makespan minimization 
has also been studied with various types of {\em resource augmentation\/}, giving an online algorithm  
additional information or power while processing $\sigma$. The following scenarios were considered.
(1)~An online algorithm knows the optimum makespan or the sum of the processing times of $\sigma$.
(2)~An online strategy has a buffer that can be used to reorder $\sigma$. Whenever a job arrives, it
is inserted into the buffer; then one job of the buffer is removed and placed in the current schedule. 
(3)~An online algorithm may migrate a certain number or volume of jobs. 

In this paper we investigate makespan minimization assuming that an online algorithm is allowed to
build several schedules in parallel while processing a job sequence $\sigma$. Each incoming job is
sequenced in each of the schedules. At the end of the scheduling process the best schedule is selected. 
We believe that this is a natural form of resource augmentation: In classical online makespan minimization, 
studied in the literature so far, an algorithm constructs a schedule while jobs arrive one by one. 
Once all jobs have arrived, the schedule may be executed. 
Hence in this standard framework there is a priori no reason why an algorithm should not be 
able to construct several solutions, the best of which is finally chosen. 

Our new proposed setting can be viewed
as providing an online algorithm with extra space, which is used to maintain several 
solutions. Very little is known about the value of 
extra space in the design of online algorithms. Makespan minimization with parallel schedules
is of particular interest in parallel processing environments where each processor can take care of
a single or a small set of schedules. We develop algorithms that require hardly any coordination 
or communication among the schedules. Last not least the proposed setting is interesting
w.\,r.\,t. to the foundations of scheduling theory, giving insight into the value of multiple
candidate solutions.

Makespan minimization with parallel schedules was also addressed by Kellerer et al.~\cite{KKST}. 
However, the paper focused on the restricted setting with  $m=2$ machines.
In this paper we explore the problem for a
general number $m$ of machines. As a main result we show that a constant number of schedules
suffices to achieve a significantly improved competitiveness, compared to the standard setting
without resource augmentation. The competitive ratios obtained are at least 
as good and in most cases better than those attained in the other models of resource augmentation 
mentioned above. 

The approach to grant an online algorithm extra space, invested to maintain multiple solutions,
could be interesting in other problems as well. The approach is viable in applications where an 
online algorithm constructs a solution that is used when the entire input has arrived.  
This is the case, for instance, in basic online graph coloring and matching problems~\cite{I,KVV,LST}. 
The approach is also promising in problems that can be solved by a set of independent agents, each of 
which constructs a separate solution. Good examples are online navigation and exploration problems
in robotics~\cite{BC,BRS,DKP}. Some results are known for graph search and 
exploration, see e.\,g.~\cite{BS,FGK,LS}, but the approach has not been studied for geometric 
environments.

\vspace*{0.1cm}

{\bf Problem definition:}
We investigate the problem {\em Makespan Minimization with Parallel Schedules (MPS)\/}. As always, the
jobs of a sequence $\sigma = J_1, \ldots, J_n$ arrive one by one and must be scheduled 
non-preemptively on $m$ identical parallel machines. Each job $J_t$ has a processing time $p_t >0$. 
In \MPS, an online algorithm ${\cal A}$ may maintain a set ${\cal S} = \{S_1, \ldots, S_l\}$ of schedules
during the scheduling process while jobs of $\sigma$ arrive. Each job $J_t$ is sequenced 
in each schedule $S_k$, $1\leq k \leq l$. At the end of $\sigma$, algorithm ${\cal A\/}$ selects a schedule 
$S_k\in {\cal S}$ having the smallest makespan and outputs this solution. The other schedules of ${\cal S\/}$
are deleted. 

As we shall show \MPS\ can be reduced to the problem variant where the optimum makespan of the job sequence
to the processed is known in advance. Hence let \MPSO\ denote the variant of \MPS\ where, prior to the
arrival of the first job, an algorithm ${\cal A\/}$ is given the value of the optimum makespan
$\opt(\sigma)$ for the incoming job sequence $\sigma$. 
An algorithm ${\cal A}$ for \MPS\ or \MPSO\ is 
$\rho$-competitive if, for every job sequence $\sigma$, it outputs a schedule whose makespan is at most 
$\rho$ times $\opt(\sigma)$.

\vspace*{0.1cm}
{\bf Our contribution:}
We present a comprehensive study of \MPS. We develop a $(4/3+\eps)$-competitive algorithm, for any
$0<\eps\leq 1$, using a constant number of $1/\eps^{O(\log (1/\eps))}$ schedules.
Furthermore, we give a $(1+\eps)$-competitive algorithm, for any $0<\eps\leq 1$, that uses a 
polynomial number of schedules. The number is $(m/\eps)^{O(\log (1/\eps) / \eps)}$, which
depends on $m$ but is independent of the job sequence $\sigma$. These performance guarantees are nearly 
best possible. The algorithms are obtained via some intermediate results that may be of 
independent interest.

First, in Section~\ref{sec:redu} we show that the original problem \MPS\ can be reduced to the variant
\MPSO\ in which the optimum makespan is known. More specifically, given any
$\rho$-competitive algorithm ${\cal A}$ for \MPSO\ we construct a $(\rho+\eps)$-competitive algorithm
${\cal A}^*(\eps)$, for any $0< \eps \leq 1$. If ${\cal A}$ uses $l$ schedules, then ${\cal A}^*(\eps)$
uses $l \cdot \lceil \log (1+ \frac{6\rho}{\eps}) / \log(1+\frac{\eps}{3\rho})\rceil$ schedules. 
The construction works for any algorithm ${\cal A}$ for \MPSO. In particular we could use a 1.6-competitive
algorithm by Chen et al.~\cite{CKK} that assumes that the optimum makespan is known and builds
a single schedule. We would obtain a $(1.6+\eps)$-competitive algorithm that builds at most 
$\lceil \log (1+ 10/\eps) / \log(1+\eps/5)\rceil$ schedules.

We proceed and develop algorithms for \MPSO. In Section~\ref{sec:ptas} we give a $(1+\eps)$-competitive
algorithm, for any $0<\eps\leq 1$, that uses 
$(\lfloor 2m/\eps\rfloor +1)^{\lceil \log(2/\eps) / \log(1+\eps/2) \rceil }$ schedules. In Section~\ref{sec:4/3}
we devise a $(4/3+\eps)$-competitive algorithm, for any $0<\eps\leq 1$, that uses 
$1/\eps^{O(\log (1/\eps))}$ schedules. Combining these algorithms with ${\cal A}^*(\eps)$, we
derive the two algorithms for \MPS\ mentioned in the above paragraph; see also Section~\ref{sec:mps}.
The number of schedules used by our strategies depends on $1/\eps$ and exponentially on 
$\log (1/\eps)$ or $1/\eps$. Such a dependence seems inherent if we wish to explore the full power
of parallel schedules. The trade-offs resemble those exhibited by PTASes in offline approximation.
Recall that the PTAS by Hochbaum and Shmoys~\cite{HS} for makespan minimization achieves
a $(1+\eps)$-approximation with a running time of $O((n/\eps)^{1/\eps^2})$.

In Section~\ref{sec:lb} we present lower bounds. We show that any online algorithm
for \MPS\ that achieves a competitive ratio smaller than 4/3 must construct more than $\lfloor m/3 \rfloor$
schedules. Hence the competitive ratio of 4/3 is best possible using a constant number of
schedules. We show a second lower bound that implies that the number of schedules of our 
$(1+\eps)$-competitive algorithm is nearly optimal, up to a polynomial factor.

Our algorithms make use of novel guessing schemes. ${\cal A}^*(\eps)$ works with guesses on the 
optimum makespan. Guessing and {\em doubling\/} the value of the optimal solution is a technique that
has been applied in other load balancing problems, see e.\,g.~\cite{Azar}. However here we 
design a refined scheme that carefully sets and readjusts guesses so that the resulting competitive
ratio increases by a factor of $1+\eps$ only, for any $\eps >0$. Moreover, the readjustment
and job assignment rules have to ensure that scheduling errors, made when guesses were to
small, are not critical. Our $(4/3+\eps)$-competitive algorithm works with guesses on the job processing
times and their frequencies in $\sigma$. In order to achieve a constant number of schedules, we have
to sparsify the set of all possible guesses. As far as we know such an approach has not been used 
in the literature before.

All our algorithms have the property that the parallel schedules are constructed basically independently.
The algorithms for \MPSO\ require no coordination at all among the schedules. In ${\cal A}^*(\eps)$ 
a schedule only has to report when it fails, i.\,e.\ when a guess on the optimum makespan is
too small.

The competitive ratios achieved with parallel schedules are considerably smaller than
the best ratios of about 1.92 known for the scenario without resource augmentation. Our ratio of $(4/3+\eps)$, 
for small $\eps$, is lower than the competitiveness of about 1.46 obtained in the settings where a 
reordering buffer of size $O(m)$ is available or $O(m)$ jobs may be reassigned. 
Skutella et al.~\cite{SSS} gave an online algorithm that is $(1+\eps)$-competitive if,
before the assignment of any job $J_t$, jobs of processing volume $2^{O((1/\eps)\log^2(1/\eps))}p_t$
may be migrated. Hence the total amount of extra resources used while scheduling $\sigma$
depends on the input sequence.


{\bf Related work:} Makespan minimization with parallel schedules was first studied by 
Kellerer et al.~\cite{KKST}. They assume that $m=2$ machines are available and two schedules may
be constructed. They show that in this case the optimal competitive ratio is 4/3.

We summarize results known for online makespan minimization without resource augmentation. As mentioned before, 
{\em List\/} is $(2-1/m)$-competitive. Deterministic online algorithms with a smaller competitive ratio were
presented in~\cite{A,BFKV,FW,GW,KPT}. The best algorithm currently known is 1.9201-competitive~\cite{FW}.
Lower bounds on the performance of deterministic strategies were given in~\cite{A,BKR,FKT,GRTW,R,RC}.
The best bound currently known is 1.88, see~\cite{R}. No randomized online algorithm whose competitive 
ratio is provably below the deterministic lower bound is currently known for general $m$. 

We next review the results for the various models of resource augmentation.
Articles~\cite{ANST,AST,AST2,AR,CKK,KKST} study makespan minimization assuming that an online algorithm knows
the optimum makespan or the sum of the processing times of $\sigma$. Chen et al.~\cite{CKK} developed a
1.6-competitive algorithm. Azar and Regev~\cite{AR} showed that no online algorithm can attain a competitive
ratio smaller than 4/3. The setting in which an online algorithm is given a reordering buffer
was explored in~\cite{EOW,KKST}. Englert et al.~\cite{EOW} presented an algorithm that, using a buffer
of size $O(m)$, achieves a competitive ratio of $W_{-1}(-1/e^2)/(1+ W_{-1}(-1/e^2))\approx 1.46$,
where $W_{-1}$ is the Lambert $W$ function. No algorithm using a buffer of size $o(n)$ can beat 
this ratio. 

Makespan minimization with job migration was addressed in~\cite{AH,SSS}. An algorithm that
achieves again a competitiveness of $W_{-1}(-1/e^2)/(1+ W_{-1}(-1/e^2))\approx 1.46$ and uses $O(m)$ job
reassignments was devised in~\cite{AH}. No algorithm using $o(n)$ reassignments can obtain
a smaller competitiveness. Sanders et al.~\cite{SSS} study a  scenario in which before the assignment 
of each job $J_t$, jobs up to a total processing volume of $\beta p_i$ may be migrated, for some constant
$\beta$. For $\beta=4/3$, they present a 1.5-competitive algorithm. They also show a $(1+\eps)$-competitive
algorithm, for any $\eps >0$, where $\beta = 2^{O((1/\eps)\log^2(1/\eps))}$. 

As for memory in online algorithms, Sleator and Tarjan~\cite{ST} studied the paging problem assuming that an
online algorithm has a larger fast memory than an offline strategy. Raghavan and Snir~\cite{RS} traded memory for randomness 
in online caching. 

{\bf Notation:} Throughout this paper it will be convenient to associate schedules with algorithms, i.\,e.\ a 
schedule $S_k$ is maintained by an algorithm $A_k$ that specifies how to assign jobs to machines in $S_k$. Thus an
algorithm ${\cal A}$ for \MPS\ or \MPSO\ can be viewed as a family $\{A_k\}_{k\in {\cal K}}$ of algorithms 
that maintain the various schedules. We will write ${\cal A} = \{A_k\}_{k\in {\cal K}}$. If ${\cal A}$ is an 
algorithm for \MPSO, then the value $\opt(\sigma)$ is of course given to all algorithms of $\{A_k\}_{k\in {\cal K}}$. 
Furthermore, the {\em load\/} of a machine always denotes the sum of the processing
times of the jobs already assigned to that machine.

\section{Reducing \MPS\ to \MPSO}\label{sec:redu}

In this section we will show that any $\rho$-competitive algorithm ${\cal A}$ for \MPSO\ can 
be used to construct a $(\rho+\eps)$-competitive algorithm ${\cal A^*}(\eps)$ for \MPS, for any $0<\eps \leq 1$. 
The main idea is to repeatedly execute ${\cal A}$ for a set of guesses on the optimum makespan.
The initial guesses are small and are increased whenever a guess turns out to be smaller than $\opt(\sigma)$. 
The increments are done in small steps so that, among the final guesses, there exists one that is upper bounded
by approximately $(1+\eps)\opt(\sigma)$. In the analysis of this scheme we will have to bound machine
loads caused by scheduling ``errors'' made when guesses were too small.
Unfortunately the execution of ${\cal A}$, given a guess $\gamma \neq \opt(\sigma)$, can lead to undefined
algorithmic behavior. As we shall show,
guesses $\gamma \geq \opt(\sigma)$ are not critical. However, guesses $\gamma < \opt(\sigma)$ have to be handled 
carefully. 

So let ${\cal A} = \{A_k\}_{k\in {\cal K}}$ be a $\rho$-competitive algorithm for \MPSO\ that, given guess 
$\gamma$, is executed on a job sequence $\sigma$. Upon the arrival of a job $J_t$, an algorithm
$A_k\in {\cal A}$ may {\em fail\/}  because the scheduling rules of $A_k$ do not specify a machine where 
to place $J_t$ in the current schedule $S_k$. We define two further conditions when an algorithm $A_k$
fails. The first one identifies situations where a makespan of $\rho\gamma$ is not preserved and hence
$\rho$-competitiveness may not be guaranteed. More precisely, $A_k$ would assign $J_t$ to a machine $M_j$ such 
that $\ell(j) + p_t > \rho\gamma$, where $\ell(j)$ denotes $M_j$'s machine load before the assignment.
The second condition identifies situations where $\gamma$ is not consistent with lower bounds on the
optimum makespan, i.\,e.\ $\gamma$ is smaller than the average machine load or the processing time
of $J_t$. Formally, an algorithm $A_k$ {\em fails\/} if a job $J_t$, $1\leq t \leq n$, has to be
scheduled and one of the following conditions holds.
\begin{enumerate}[(i)] 
 \item $A_k$ does not specify a machine where to place $J_t$ in the current schedule $S_k$.
 \item There holds $\ell(j)+p_t > \rho \gamma$, for the machine $M_j$ to which $A_k$ would assign $J_t$ 
      in $S_k$.
 \item There holds $\gamma < \sum_{t'\leq t} p_{t'}/m$ or $\gamma < p_t$.
\end{enumerate}

We first show that guesses $\gamma \geq\opt(\sigma)$ are not problematic. If a $\rho$-competitive 
algorithm ${\cal A} = \{A_k\}_{k\in {\cal K}}$ for MPS$_{\rm opt}$ is given a guess $\gamma \geq \opt(\sigma)$, 
then there exists an algorithm $A_k\in {\cal A}$ that does not fail during 
the processing of $\sigma$ and generates a schedule whose makespan is at most $\rho\gamma$. This is shown by the next lemma.

\begin{lemma}\label{lem:guess1}
Let ${\cal A} = \{A_k\}_{k\in {\cal K}}$ be a $\rho$-competitive algorithm for MPS$_{\rm opt}$ that, given guess
$\gamma$, is executed on a job sequence $\sigma$ with $\gamma \geq \opt(\sigma)$. Then there exists an
algorithm $A_k\in {\cal A}$ that does not fail during the processing of $\sigma$ and generates a schedule
whose makespan is at most $\rho\gamma$.
\end{lemma}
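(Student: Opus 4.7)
My plan is to reduce the scenario at hand to the ideal case where the guess equals the true optimum, by padding $\sigma$ with dummy jobs. Concretely, I fix an optimal schedule of $\sigma$ and let $L_j\leq\opt(\sigma)\leq\gamma$ denote the load of machine $M_j$ in it. For every $M_j$ with $L_j<\gamma$ I append to $\sigma$, in arbitrary order, a dummy job of size $\gamma-L_j>0$; call the resulting sequence $\sigma'$. The total processing volume of $\sigma'$ is exactly $m\gamma$, so $\opt(\sigma')\geq\gamma$; conversely, extending the fixed optimal schedule of $\sigma$ by placing each dummy on the machine it was designed for gives a schedule of $\sigma'$ with makespan $\gamma$. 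Hence $\opt(\sigma')=\gamma$.

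Next I would invoke the $\rho$-competitiveness of $\mathcal{A}$ on $\sigma'$ with the (now correct) optimum value $\gamma$: some $A_{k^*}\in\mathcal{A}$ produces a schedule of $\sigma'$ with makespan at most $\rho\gamma$. In this hypothetical run $A_{k^*}$ does not fail on any job of $\sigma'$: condition~(i) cannot occur since $A_{k^*}$ completes a full schedule; condition~(ii) cannot occur since no assignment can push a machine above the eventual makespan bound $\rho\gamma$; and condition~(iii) cannot occur since $\gamma=\opt(\sigma')$ is, by definition, at least $\max_{t}p_t$ and at least $\sum_{t'\leq t}p_{t'}/m$ for every prefix of $\sigma'$, and therefore for every prefix of $\sigma$ as well.

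The remaining step is the key observation that every $A_k$ is online, so the machine it chooses for $J_t$ depends only on the parameter $\gamma$ and on the prefix $J_1,\ldots,J_{t-1}$. Since the first $n$ jobs of $\sigma'$ coincide with $\sigma$, the sequences of assignments made by $A_{k^*}$ on those jobs are identical in the actual run (on $\sigma$ with guess $\gamma$) and in the hypothetical run (on $\sigma'$ with true optimum $\gamma$). Consequently the failure conditions, which depend only on $\gamma$, $p_t$ and the current machine loads, evaluate identically at each $J_t\in\sigma$ across the two runs, so $A_{k^*}$ does not fail on $\sigma$. Its final schedule of $\sigma$ is a sub-schedule of its schedule of $\sigma'$ and hence has makespan at most $\rho\gamma$. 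I expect the only point needing care is the padding step: we must ensure $\opt(\sigma')$ equals $\gamma$ exactly, since otherwise feeding $\gamma$ to $\mathcal{A}$ would again correspond to an incorrect guess and the competitiveness hypothesis could not be applied.
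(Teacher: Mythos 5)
Your proposal is correct and follows essentially the same route as the paper's proof: pad $\sigma$ with up to $m$ dummy jobs of sizes $\gamma-L_j$ so that $\opt(\sigma')=\gamma$ exactly, invoke $\rho$-competitiveness on $\sigma'$ with the now-correct guess, and transfer the conclusion back to the prefix $\sigma$ via the online property. Your explicit verification of the three failure conditions is a useful elaboration of a step the paper leaves implicit, but the argument is the same.
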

\begin{proof}
Let $S_{\rm opt}$ be an optimal schedule for the job sequence \linebreak $\sigma = J_1, \ldots, J_n$. Moreover, let $\ell(j)$ 
denote the load of machine $M_j$ in $S_{\rm opt}$, $1\leq j\leq m$. For any $j$ with $\ell(j) <\gamma$, define a
job $J'_j$ of processing time $p'_j = \gamma-\ell(j)$. Let $\sigma'$ be the job sequence consisting
of $\sigma$ followed by the new jobs $J'_j$. These up to $m$ jobs may be appended to $\sigma$ in any order.
Obviously $\opt(\sigma') = \gamma$. Hence when ${\cal A\/}$ using guess $\gamma$ is executed on $\sigma'$,
there must exist an algorithm $A_{k^*}\in {\cal A}$ that generates a schedule with a makespan of at most 
$\rho\gamma$. Since $\sigma$ is a prefix of $\sigma'$, this algorithm $A_{k^*}$ does not fail and generates
a schedule with a makespan of at most $\rho\gamma$, when ${\cal A}$ given guess $\gamma$ is executed on
$\sigma$. \hspace*{\fill}{$\Box$}
\end{proof}

\vspace*{0.1cm}

{\bf Algorithm for MPS:} We describe our algorithm ${\cal A^*}(\eps,h)$ for \MPS, where 
$0<\eps\leq 1$ and $h\in \mathbb{N}$ may be
chosen arbitrarily. The construction takes as input any algorithm ${\cal A} = \{A_k\}_{k\in {\cal K}}$
for \MPSO. For a proper choice of $h$, ${\cal A^*}(\eps,h)$ will be $(\rho+\eps)$-competitive, provided
that ${\cal A}$ is $\rho$-competitive. 

At any time ${\cal A^*}(\eps,h)$ works with $h$ guesses $\gamma_1 < \ldots < \gamma_h$ on the optimum
makespan for the incoming job sequence $\sigma$. These guesses may be adjusted during the processing
of $\sigma$; the update procedure will be described in detail below. For each guess $\gamma_i$, $1\leq i \leq h$,
${\cal A^*}(\eps,h)$ executes ${\cal A}$. Hence ${\cal A^*}(\eps,h)$ maintains a total of $h|{\cal K}|$ schedules,
which can be partitioned into subsets ${\cal S}_1, \ldots, {\cal S}_h$. Subset ${\cal S}_i$ contains
those schedules generated by ${\cal A}$ using $\gamma_i$, $1\leq i \leq h$. Let $S_{ik}\in {\cal S}_i$
denote the schedule generated by $A_k$ using $\gamma_i$. 

A job sequence $\sigma$ is processed as follows. Initially, upon the arrival of the first job $J_1$, the
guesses are initialized as $\gamma_1 = p_1$ and $\gamma_i = (1+\eps)\gamma_{i-1}$, for $i=2, \ldots, h$. 
Each job $J_t$, $1\leq t \leq n$, is handled in the following way. Of course each such job is sequenced
in every schedule $S_{ik}$, $1\leq i \leq h$ and $1\leq k \leq |{\cal K}|$. Algorithm  ${\cal A^*}(\eps,h)$
checks if $A_k$ using $\gamma_i$ fails when having to sequence $J_t$ in $S_{ik}$. We remark that this
check can be performed easily by just verifying if one of the conditions (i--iii) holds. If $A_k$ using 
$\gamma_i$ does not fail and has not failed since the last adjustment of $\gamma_i$, then in $S_{ik}$
job $J_t$ is assigned to the machine specified by $A_k$ using $\gamma_i$. The initialization of a
guess is also regarded as an adjustment.  If $A_k$ using $\gamma_i$ does fail, then
$J_t$ and all future jobs are always assigned to a least loaded machine in $S_{ik}$
until $\gamma_i$ is adjusted the next time. 

Suppose that after the sequencing of $J_t$ all algorithms of ${\cal A} = \{A_k\}_{k\in {\cal K}}$ using
a particular guess $\gamma_i$ have failed since the last adjustment of this guess. Let $i^*$ be the
largest index $i$ with this property. Then the guesses $\gamma_1, \ldots, \gamma_{i^*}$ are adjusted.
Set $\gamma_1 = (1+\eps)\max\{\gamma_h,p_t, \sum_{1\leq t'\leq t} p_{t'}/m\}$ and 
$\gamma_i = (1+\eps)\gamma_{i-1}$, for $i=2, \ldots, i^*$. For any readjusted  guess $\gamma_i$, 
$1\leq i \leq i^*$, algorithm ${\cal A}$ using $\gamma_i$ ignores all jobs $J_{t'}$ with $t'<t$ when
processing future jobs of $\sigma$. Specifically, when making scheduling decisions and determining
machine loads, algorithm $A_k$ using $\gamma_i$ ignores all job $J_{t'}$ with $t'<t$ in its schedule
$S_{ik}$. These jobs are also ignored when ${\cal A^*}(\eps,h)$ checks if $A_k$ using guess $\gamma_i$ fails
on the arrival of a job. Furthermore, after the assignment of $J_t$, machines in $S_{ik}$ machines are renumbered
so that $J_t$ is located on a machine it would occupy if it were the first job of an input sequence. 

When guesses have been adjusted, they are renumbered, together with the corresponding schedule sets ${\cal S}_i$, 
such that again $\gamma_1 < \ldots < \gamma_h$. Hence at any time $\gamma_1 = \min_{1\leq i\leq h} \gamma_i$ and
$\gamma_i \geq (1+\eps)\gamma_{i-1}$, for $i=2, \ldots, h$. We also observe that whenever a guess is adjusted, its 
value increases by a factor of at least $(1+\eps)^h$. A summary of ${\cal A^*}(\eps,h)$ is given in Figure~\ref{fig:1}.

\begin{figure}[h]
\fbox{
\begin{minipage}{11.7cm}
{\bf Algorithm ${\cal A^*}(\eps,h)$} \\[-15pt]
\begin{itemize}
\item[1.] Set $\gamma_i = p_1(1+\eps)^{i-1}$, for $i= 1, \ldots, h$.\\[-8pt]
\item[2.] At time $t$ execute the following steps.\\[-8pt]
\begin{itemize}
\item[(a)] $J_t$ is sequenced as follows in each $S_{ik}$. If $A_k$ using $\gamma_i$ fails or has
 failed since the last adjustment of $\gamma_i$, then assign $J_t$ to a least loaded machine. Otherwise
  assign it to the machine specified by $A_k$, ignoring jobs that arrived before the last adjustment of
 $\gamma_i$.\\[-8pt]
\item[(b)] If all algorithms $\{A_k\}_{k\in {\cal K}}$ for some $\gamma_{i}$ have failed since the last readjustment
  of $\gamma_{i}$, then let $i^*$ be the largest index with this property. Set $\gamma_i = (1+\eps)^i \max\{\gamma_h, p_t, \sum_{t'\leq t} p_{t'}/m\}$,  for $i= 1, \ldots, i^*$. Renumber the guesses such that $\gamma_1 < \ldots < \gamma_h$.\\[-5pt]
\end{itemize}
\end{itemize}
\end{minipage} 
}
\caption{The algorithm ${\cal A^*}(\eps,h)$}\label{fig:1}
\end{figure}

We obtain the following theorem.

\begin{theorem}\label{th:guess1}
Let ${\cal A}= \{A_k\}_{k\in {\cal K}}$ be a $\rho$-competitive algorithm for MPS$_{\rm opt}$.
Then for any $0<\eps \leq 1$ and 
$h = \lceil \log (1+ \frac{6\rho}{\eps}) / \log(1+\frac{\eps}{3\rho})\rceil$, algorithm
${\cal A^*}(\eps) = {\cal A^*}(\eps/(3\rho),h)$ for MPS is $(\rho+\eps)$-competitive and uses 
$h |{\cal K}|$ schedules. 
\end{theorem}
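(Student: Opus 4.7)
The schedule count $h|{\cal K}|$ is immediate from the construction of ${\cal A^*}(\eps/(3\rho),h)$, which runs the $|{\cal K}|$ algorithms of ${\cal A}$ in parallel on each of the $h$ guesses. For the competitive ratio, set $\eps'=\eps/(3\rho)$; the choice of $h$ gives $(1+\eps')^h\geq 1+2/\eps'$. I will look at the configuration after $\sigma$ has been fully processed. Write $\gamma_1<\cdots<\gamma_h$ for the final guesses, let $\tau_i$ be the time of the last adjustment of $\gamma_i$ (equal to $1$ if $\gamma_i$ was only initialized), and recall that the algorithms using $\gamma_i$ effectively process the suffix $\sigma^{(i)}=J_{\tau_i},\ldots,J_n$ freshly, while any jobs placed in $S_{ik}$ before $\tau_i$ sit as ``garbage'' load on machines.

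The first step is to choose a good index. I would take $i^*$ as the smallest index with $\gamma_{i^*}\geq \opt(\sigma)$ (or $i^*=h$ if none exists, which is immediate since then $\rho\gamma_h<\rho\opt(\sigma)$). By Lemma~\ref{lem:guess1} applied to $\sigma^{(i^*)}$ with $\gamma_{i^*}\geq \opt(\sigma)\geq \opt(\sigma^{(i^*)})$, some algorithm $A_{k^*}$ using $\gamma_{i^*}$ does not fail on $\sigma^{(i^*)}$, so the contribution of post-$\tau_{i^*}$ jobs to any machine of $S_{i^*k^*}$ is at most $\rho\gamma_{i^*}$. To bound $\gamma_{i^*}\leq(1+\eps')\opt(\sigma)$, note that $\gamma_{i^*-1}<\opt(\sigma)$ and $\gamma_{i+1}\geq(1+\eps')\gamma_i$ for every $i$; the one case where the consecutive ratio may strictly exceed $1+\eps'$ is when $\gamma_{i^*}$ is a ``new'' guess $(1+\eps')M$ sitting just above a ``preceding-batch'' guess $\gamma_{i^*-1}=\gamma_h^{\mathrm{old}}$, where $M=\max\{\gamma_h^{\mathrm{old}},p_\tau,\sum_{t'\leq\tau}p_{t'}/m\}$ is the quantity used at that adjustment. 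In this case each summand of $M$ is at most $\opt(\sigma)$: one has $\gamma_h^{\mathrm{old}}=\gamma_{i^*-1}<\opt(\sigma)$ by minimality of $i^*$, and the trivial bounds $p_\tau,\sum_{t'\leq\tau}p_{t'}/m\leq \opt(\sigma)$ always hold. Thus $M\leq\opt(\sigma)$, so $\gamma_{i^*}\leq(1+\eps')\opt(\sigma)$ and the main contribution to the makespan is at most $\rho(1+\eps')\opt(\sigma)=(\rho+\eps/3)\opt(\sigma)$.

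The hardest step is bounding the pre-$\tau_{i^*}$ garbage on any machine of $S_{i^*k^*}$. Viewing the history of this schedule as a sequence of phases separated by adjustments of its active guess, one has: while $A_{k^*}$ has not failed in a phase with active guess $\gamma^{(j)}$, it keeps each machine's post-adjustment load at most $\rho\gamma^{(j)}$; after $A_{k^*}$ fails, the greedy-least-loaded rule keeps loads within $p_{\max}^{(j)}\leq \gamma^{(j)}$ of the average total volume (itself bounded by $\gamma^{(j)}$ via condition~(iii)). Each adjustment multiplies the active largest guess by at least $(1+\eps')^h\geq 2/\eps'$, so the sequence $\gamma^{(1)}<\gamma^{(2)}<\cdots$ is exponential, the sum of per-phase contributions is dominated by the last pre-$\tau_{i^*}$ term, and that term is bounded by $\gamma_{i^*}/(1+\eps')^h\leq (\eps'/2)\gamma_{i^*}$. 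After absorbing multiplicative constants this yields a per-machine garbage bound of at most $(2\eps/3)\opt(\sigma)$, and the total makespan of $S_{i^*k^*}$ is therefore at most $(\rho+\eps)\opt(\sigma)$. The principal technical challenge will be carrying out this garbage bookkeeping rigorously: reconciling algorithm-driven placements (with their clean $\rho\gamma$ bound) with greedy placements after failure (whose per-machine bound requires a list-scheduling style argument), and tracking how the labels $S_{ik}$ are reassigned across multiple renumberings so that $S_{i^*k^*}$ corresponds to a well-defined algorithm--guess history.
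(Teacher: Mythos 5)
Your proposal is correct and follows essentially the same route as the paper's proof: the paper analyzes the smallest final guess $\gamma_1$ (shown to satisfy $\gamma_1\le(1+\eps')\opt(\sigma)$ in Lemma~\ref{lem:guess2}) rather than your $i^*$, certifies a non-failing algorithm via the adjustment mechanism rather than via Lemma~\ref{lem:guess1} on the suffix, and then performs exactly your phase decomposition --- $\rho\gamma$ for the final phase, $\max\{\rho,2\}\gamma^{(j)}$ per earlier phase via failure conditions (ii)/(iii) and the least-loaded rule, summed geometrically using the $(1+\eps')^h$ growth between adjustments. These differences are cosmetic; your minimality argument for $\gamma_{i^*}\le(1+\eps')\opt(\sigma)$ is a repackaging of the case analysis in Lemma~\ref{lem:guess2}, which you would still need in full for the cases $i^*=1$ and ``no such $i^*$ exists.''
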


For the analysis of ${\cal A^*}(\eps,h)$ we need the following lemma. 

\begin{lemma}\label{lem:guess2}
After ${\cal A^*}(\eps,h)$ has processed a job sequence $\sigma$, there holds $\gamma_1\leq (1+\eps)\opt(\sigma)$.
\end{lemma}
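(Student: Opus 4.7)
The plan is to analyze the final value $V := \gamma_1$ by tracing when it entered the multiset of current guesses. Let $t_V$ be the time at which $V$ was introduced: either $t_V=1$ and $V=(1+\eps)^{j}p_1$ for some $j\in\{0,\dots,h-1\}$, or $t_V>1$ is an adjustment time at which $V$ was set to $(1+\eps)^{j}M$ with $M=\max\{\gamma_h^{\mathrm{old}},p_{t_V},\sum_{t'\le t_V}p_{t'}/m\}$ and some $j\in\{1,\dots,i^*\}$, where $\gamma_h^{\mathrm{old}}$ and $i^*$ refer to the moment just before the adjustment at $t_V$. Two structural observations drive the argument. First, from time $t_V$ onward $V$ is continuously present in the configuration, because it is the final minimum and hence never removed. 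Second, any guess created at a later adjustment $t'>t_V$ is at least $(1+\eps)$ times the value of $\gamma_h$ just before that adjustment, and therefore strictly exceeds $V$ (since $V$ is in the configuration at $t'$ and so is bounded by the current $\gamma_h$). Consequently, the only values strictly below $V$ that can ever appear in the configuration at times $\ge t_V$ are those already present immediately after $t_V$.

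The workhorse is the contrapositive of Lemma~\ref{lem:guess1}: at every adjustment time $t_a$, all algorithms using $\gamma_{i^*}^{\mathrm{old}}$ (with $i^*$ the adjustment index at $t_a$) have failed on the subsequence of $\sigma_{\le t_a}$ since the last adjustment of that guess, so Lemma~\ref{lem:guess1} applied to this subsequence yields $\gamma_{i^*}^{\mathrm{old}}<\opt(\sigma)$. Since the adjustment removes exactly the $i^*$ smallest current guesses, every value removed at $t_a$ is at most $\gamma_{i^*}^{\mathrm{old}}$ and therefore strictly less than $\opt(\sigma)$.

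I then case-split on the origin of $V$. If $V=p_1$, then $V\le\opt(\sigma)$ trivially. If $V=(1+\eps)^{j}p_1$ with $j\ge 1$ or $V=(1+\eps)^{j}M$ with $j\ge 2$, then the largest value strictly below $V$ present at time $t_V^+$ is $V/(1+\eps)$ itself—either the preceding initial guess or a smaller sibling new guess, both of which exceed every retained old value at $t_V$—and by the structural observations this predecessor must be removed by some later adjustment, so the workhorse gives $V/(1+\eps)<\opt(\sigma)$ and hence $V<(1+\eps)\opt(\sigma)$. The remaining subcase is $V=(1+\eps)M$: here $p_{t_V}\le\opt(\sigma)$ and $\sum_{t'\le t_V}p_{t'}/m\le\opt(\sigma)$ are immediate, and the third summand $W:=\gamma_h^{\mathrm{old}}$ satisfies $W<\opt(\sigma)$ either directly by the workhorse (if $i^*=h$ at $t_V$) or by the structural observations (if $i^*<h$, then $W<V$ is retained at $t_V$ and therefore eventually removed, at which point the workhorse applies). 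Thus $M\le\opt(\sigma)$ and $V\le(1+\eps)\opt(\sigma)$.

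The main technical hurdle is verifying that in each subcase the relevant predecessor of $V$—a sibling new guess, a preceding initial guess, or the retained value $W$—really is removed by some subsequent adjustment. This is precisely what the two structural observations from the first paragraph guarantee: no later adjustment can create a value below $V$, and each adjustment removes its $i^*$ smallest current values in sorted order, so every value below $V$ that exists at time $t_V^+$ is eventually removed. This puts Lemma~\ref{lem:guess1} at our disposal for the predecessor in every case.
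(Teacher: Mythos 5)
Your proof is correct and follows essentially the same route as the paper's: the same key claim (every guess that gets adjusted is below $\opt(\sigma)$, via Lemma~\ref{lem:guess1} applied to the subsequence since that guess's last adjustment) combined with the observation that the final minimum equals $(1+\eps)$ times a quantity that is either a direct lower bound on $\opt(\sigma)$ or a guess that must be adjusted later. Your case split on the exponent $j$ at the moment $V$ is introduced is just a reorganization of the paper's case analysis on the history of the variable storing $\gamma_1$.
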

\begin{proof}
At any time ${\cal A^*}(\eps,h)$ maintains $h$ guesses. We can view these guesses as being
stored in $h$ variables. A variable is updated whenever its current guess is increased. Hence
during the processing of $\sigma$ a variable may take any position in the sorted sequence of
guesses. We analyze the steps in which ${\cal A^*}(\eps,h)$ adjusts guesses.

We first show that when ${\cal A^*}(\eps,h)$ adjusts a guess $\gamma$, then $\gamma < \opt(\sigma)$. So
suppose that after the arrival of a job $J_t$, ${\cal A^*}(\eps,h)$ adjust guesses $\gamma_1, \ldots,
\gamma_{i^*}$, where $i^*$ is the largest index $i$ such that all algorithms $\{A_k\}_{k\in {\cal K}}$ 
using $\gamma_i$ have failed. We prove $\gamma_{i^*} < \opt(\sigma)$, which implies the desired statement
because guesses are numbered in order of increasing value. Let $t^*$, with $t^* <t$, be the most recent time when 
the variable storing $\gamma_{i^*}$ was updated last. If the variable has never been updated since its
initialization, then let $t^*=1$. All the algorithms $\{A_k\}_{k\in {\cal K}}$ using $\gamma_{i^*}$ ignore
the jobs having arrived before $J_{t^*}$ when making scheduling decisions for $J_{t^*}, \ldots, J_t$. 
Let $\sigma^* = J_{t^*}, \ldots, J_t$. There holds, $\opt(\sigma^*) \leq \opt(\sigma)$. 
If $\gamma_{i^*} \geq \opt(\sigma)$ held true, then by Lemma~\ref{lem:guess1} there would be an algorithm 
$A_{k^*}\in \{A_k\}_{k\in {\cal K}}$ that, using guess $\gamma_{i^*}$, does not fail when handling $\sigma^*$.
This contradicts the fact that at time $t$ all algorithms $\{A_k\}_{k\in {\cal K}}$ using
$\gamma_{i^*}$ fail or have failed since the arrival of $J_{t^*}$. 

Let $\gamma_1^e$ denote the value of the smallest guess when ${\cal A^*}(\eps,h)$  has finished processing
$\sigma$. We distinguish two cases depending on whether or not the variable storing $\gamma_1^e$ has
ever been updated since its initialization. If the variable has never been updated, then
$\gamma_1^e= p_1(1+\eps)^{i-1}$, for some $i\in \{1,\ldots, h\}$. If $i=1$, there is nothing to show
because $p_1\leq \opt(\sigma)$. If $i>1$, then the initial guess of value $\gamma_{i-1} = p_1(1+\eps)^{i-2}$
must have been adjusted. This implies, as shown above, $\gamma_{i-1} < \opt(\sigma)$ and the lemma follows 
because $\gamma_1^e = (1+\eps) \gamma_{i-1}$. 

In the remainder of the proof we assume that the variable $g$ storing $\gamma_1^e$ has been updated. Consider 
the last update of $g$  before the end of $\sigma$ and suppose that it took place on 
the arrival of job $J_{t^*}$. First assume that $g$ stores the smallest guess, among the $h$ guesses, before 
the update. Then $\gamma_1^e = (1+\eps)\max\{\gamma^*,p_{t^*}, \sum_{1\leq t'\leq t^*} p_{t'}/m\}$, where
$\gamma^*$ is the largest guess before the update. If $\gamma^*$ is also adjusted on the arrival of $J_{t^*}$,
then we are done because, as shown above, $\gamma^* < \opt(\sigma)$ and thus
$\max\{\gamma^*,p_{t^*}, \sum_{1\leq t'\leq t^*} p_{t'}/m\} \leq \opt(\sigma)$. If $\gamma^*$ is
not adjusted on the arrival of $J_{t^*}$, then $\gamma_1^e$ is the smallest guess greater than $\gamma^*$ 
after the update. By the end of $\sigma$ guess $\gamma^*$ must be adjusted since otherwise $\gamma_1^e$ cannot
become the smallest guess. Again $\gamma^* < \opt(\sigma)$ and we are done.

Finally assume that before the update $g$ does not store the smallest guess. Let $g'$ be the variable
that stores the largest guess smaller than that in $g$. After the update there holds $\gamma_1^e = (1+\eps)\gamma$,
where $\gamma$ is the guess stored in $g'$ after the update. Until the end of $\sigma$, $\gamma$ must be
adjusted again since otherwise $\gamma_1^e$ cannot become the smallest guess. Again $\gamma <\opt(\sigma)$ and
hence $\gamma_1^e < (1+\eps)\opt(\sigma)$. \hspace*{\fill}{$\Box$}
\end{proof}

\begin{proof}[of Theorem~\ref{th:guess1}]
Throughout the proof let $h = \lceil \log (1+ \frac{6\rho}{\eps}) / \log(1+\frac{\eps}{3\rho})\rceil$ and
${\cal A^*}(\eps) = {\cal A^*}(\eps/(3\rho),h)$. Consider an arbitrary job sequence and let $\gamma_1$ 
be the smallest of the $h$ guesses maintained by ${\cal A^*}(\eps)$ at the end of $\sigma$. Let
${\cal S}_1$ be the set of schedules associated with $\gamma_1$, i.\,e.\ ${\cal S}_1$ was generated
by ${\cal A}= \{A_k\}_{k\in {\cal K}}$ using a series of guesses ending with $\gamma_1$. Let
$\gamma(0) < \ldots < \gamma(s)$, with $s\geq 0$, be this series and $g$ be the variable
that stored these guesses. Here $\gamma(0)$ is one of the initial guesses and $\gamma(s)= \gamma_1$. 

A first observation is that at the end of $\sigma$ there exists an algorithm $A_{k^*}\in \{A_k\}_{k\in {\cal K}}$
that using $\gamma_1$ has not failed. This holds true if $g$ was set to $\gamma_1 = \gamma(s)$ upon
the arrival of a job $J_t$ with $t<n$ because the failure of all algorithms $\{A_k\}_{k\in {\cal K}}$
using $\gamma_1$ would have caused an adjustment of $\gamma_1$. This also holds true if $g$ was set to 
$\gamma_1$ upon the arrival of $J_n$ because in this case none of the algorithms $\{A_k\}_{k\in {\cal K}}$
using $\gamma_1$ has failed at the end of $\sigma$. So let $A_{k^*}\in \{A_k\}_{k\in {\cal K}}$ be
an algorithm that using $\gamma_1$ has not failed and let $S_{1k^*}$ be the associated schedule. 
We prove that the load of every machine in $S_{1k^*}$ is upper bounded by $(\rho+\eps)\opt(\sigma)$. 
This establishes the theorem.

Let $t_0=1$. If the variable $g$ was updated during the processing of $\sigma$, then let $t_1, \ldots, t_s$
be these points in time, i.\,e.\ the arrival of $J_{t_i}$ caused an update of $g$ and the variable was
set to $\gamma(i)$, $1\leq i \leq s$. For any machine $M_j$, $1\leq j \leq m$, in
$S_{1k^*}$ let $\ell(j)$ denote its final load at the end of $\sigma$. Moreover, let $\ell_{t_i}(j)$ 
denote its load due to jobs $J_t$ with $t\geq t_i$, for $i=0, \ldots, s$. Obviously
\begin{align}\label{eq:b1}
 \ell(j) =  \ell_{t_s}(j) +\sum_{i=0}^{s-1} \left ( \ell_{t_{i}}(j) - \ell_{t_{i+1}}(j) \right ).
\end{align}

We first show that $\ell_{t_s}(j) \leq \rho\gamma_1$. Immediately after $J_{t_s}$ has been scheduled
$M_j$'s load consisting of jobs $J_{t'}$ with $t'\geq t_s$ is at most $p_{t_s}$. Since $g$ was set to
$\gamma(s) = \gamma_1$ on the arrival of $J_{t_s}$, the guess adjustment rule ensures $p_{t_s} \leq \gamma_1$.
Until the end of $\sigma$ algorithm $A_{k^*}$ using $\gamma_1$ does not fail and hence condition~(ii)
specifying the failure of algorithms implies that the assignment of each further job does not create
a machine load greater than $\rho\gamma_1$ in $S_{1k^*}$. 

We next show $\ell_{t_{i}}(j) - \ell_{t_{i+1}}(j) \leq \max\{\rho, 2\} \gamma(i)$, for each $i=0,\dots,s-1$. The latter
difference is the load on machine $M_j$ caused by jobs of the subsequence $J_{t_i}, \ldots, J_{t_{i+1}-1}$. Hence it
suffices to show that after the assignment of any $J_t$, with $t_i \leq t < t_{i+1}$, $M_j$'s load due
to jobs $J_{t'}$, with $t'\geq t_i$, is at most $\max\{\rho,2\}\gamma(i)$. After the assignment of $J_{t_i}$
$M_j$'s respective load $\ell_{t_i}(j)$ is at most $p_{t_i}$ and this value is upper bounded by $\gamma(i)$ as ensured by the
guess adjustment rule. At times $t>t_i$, while $A_{k^*}$ using $\gamma(i)$ has not failed, $M_j$'s load
due to jobs $J_{t'}$ with $t'\geq t_i$ does not exceed $\rho\gamma(i)$ as ensured by condition~(ii) specifying
the failure of algorithms. Finally consider a time $t$, $t_i < t <t_{i+1}$, at which $A_{k^*}$
fails or has failed. The incoming job $J_t$ is assigned to a least loaded machine. Hence if $J_t$ is
placed on $M_j$, then the resulting machine load due to jobs $J_{t'}$ with $t'\geq t_i$ is upper bounded by 
$\sum_{t_i\leq t'< t}p_{t'}/m + p_t \leq \sum_{1\leq t'\leq t}p_{t'}/m + p_t$. Observe that after
the arrival of $J_t$ there exists an algorithm $A_k\in {\cal A}$ that using $\gamma(i)$ has not yet
failed, since otherwise $\gamma(i)$ would be adjusted before time $t_{i+1}$. Condition~(iii) defining
the failure of algorithms ensures that $\sum_{1\leq t'\leq t} p_{t'}/m \leq \gamma(i)$ and $p_t\leq \gamma(i)$.
We obtain that $M_j$'s machine load is at most $2\gamma(i)$.

We conclude that (\ref{eq:b1}) is upper bounded by 
\begin{align}
\rho \gamma_1 + \sum_{i=0}^{s-1}  \max\{\rho,2\}\gamma(i).  \label{eq:b2}
\end{align}
By Lemma~\ref{lem:guess2}, $\gamma_1 =\gamma(s) \leq (1+\eps/(3\rho))  \opt(\sigma)$. 
At the end of the description of ${\cal A^*}(\eps,h)$ we observed that whenever a guess is adjusted it
increases by a factor of at least $(1+\eps)^h$. Hence $\gamma(i) \geq (1+\eps/(3\rho))^h \gamma(i-1)$.
It follows that $\gamma(i) \leq \frac{\gamma(s)}{(1+(\eps/3\rho))^{(s-i)\cdot h}}$, for every $0 \leq i \leq s$. Hence~(\ref{eq:b2}) is upper bounded by
\begin{eqnarray}
\lefteqn{\rho(1+\frac{\eps}{3\rho})\opt(\sigma) + \sum_{i=0}^{s-1}  \frac{\max\{\rho,2\}\gamma(s)}{(1+\eps/(3\rho))^{h\cdot (s-i)}} \notag }\\
& \leq & \rho (1+\frac{\eps}{3\rho})\opt(\sigma) +  \rho(1+\frac{\eps}{3\rho})\opt(\sigma)\sum_{i=0}^{s-1}  \frac{2}{(1+\eps/(3\rho))^{h\cdot (s-i)}} \label{eq:xb2}\\
& \leq & \rho (1+\frac{\eps}{3\rho})\opt(\sigma) \left ( 1 +  \sum_{i=1}^{\infty}  \frac{2}{(1+\eps/(3\rho))^{h\cdot i}} \right ) \notag  \\
& =  & \rho (1+\frac{\eps}{3\rho})\opt(\sigma) \left ( 1 +    \frac{2}{(1+\eps/(3\rho))^h-1} \right )   \label{eq:b3} \\
&\leq  & \rho (1+\frac{\eps}{3\rho})^2\opt(\sigma) \ \leq \ \rho (1+\frac{\eps}{\rho})\opt(\sigma) = (\rho+\eps)\opt(\sigma).    \label{eq:b4} 
\end{eqnarray}
Here~(\ref{eq:xb2}) uses the fact that $\max\{\rho,2\}\leq 2\rho$ and, as mentioned above, is a consequence of
Lemma~\ref{lem:guess2}. Line~(\ref{eq:b3}) follows from the Geometric Series and, finally, (\ref{eq:b4}) is by the choice of $h$ and the assumption $0< \eps\leq 1$. \hspace*{\fill}{$\Box$}
\end{proof}

\section{A { $\mathbf{(1+\eps)}$}-competitive algorithm for \MPSO}\label{sec:ptas}
We present an algorithm ${\cal A}_1(\eps)$ for \MPSO\ that attains a competitive ratio of $1+\eps$, for any
$\eps>0$. The number of parallel schedules will be 
$(\lfloor 2m/\eps\rfloor +1)^{\lceil \log(2/\eps) / \log(1+\eps/2) \rceil }$.
The algorithms will yield a $(1+\eps)$-competitive strategy for $\MPS$ and, furthermore,
will be useful in the next section where we develop a $(4/3+\eps)$-competitive algorithm for \MPSO. 
There ${\cal A}_1(\eps)$ will be used as subroutine for a small, constant number of $m$.

{\bf Description of ${\cal A}_1(\eps)$:} Let $\eps>0$ be arbitrary. Recall that in \MPSO\ the optimum makespan
$\opt(\sigma)$ for the incoming job sequence $\sigma$ is initially known. Assume without loss of generality that
$\opt(\sigma)=1$. Then all job processing times are in $(0,1]$. Set $\eps' = \eps/2$. First we partition
the range of possible job processing times into intervals $I_0, \ldots, I_l$ such, within each interval $I_i$ 
with $i\geq 1$, the values differ by a factor of at most $1+\eps'$. Such a partitioning is 
standard and has been used e.\,g.\ in the PTAS for offline makespan minimization~\cite{HS}.
Let $l={\lceil \log(1/\eps') / \log(1+\eps') \rceil}$. Set $I_0 =(0,\eps']$ and 
$I_i = ((1+\eps')^{i-1}\eps', (1+\eps')^i\eps']$, for $i=1, \ldots, l$. Obviously 
$I_0\cup \ldots \cup I_l = (0, (1+\eps')^l\eps']$ and $(0,1] \subseteq (0, (1+\eps')^l\eps']$.
A job is {\em small\/} if its processing time is at most $\eps'$ and hence contained in $I_0$; otherwise the
job is {\em large\/}. 

Each job sequence $\sigma$ with $\opt(\sigma) = 1$ contains at most $\lfloor m/\eps'\rfloor$
large jobs. For each possible distribution of large jobs over the processing time intervals 
$I_1, \ldots, I_l$, algorithm ${\cal A}_1(\eps)$ prepares one algorithm/schedule.
Let $V=\{(v_1,\dots,v_l) \in \mathbb{N}^l_0 \mid  v_i \leq \lfloor m/\eps' \rfloor \}$. There holds 
$|V|=(\lfloor m/\eps'\rfloor+1)^l$. Let ${\cal A}_1(\eps) = \{A_v\}_{v\in V}$. For any vector 
$v=(v_1, \ldots, v_n)\in V$, algorithm $A_v$ works as follows. It assumes that the incoming job 
sequence $\sigma$ contains exactly $v_i$ jobs with a processing time in $I_i$, for $i=1, \ldots, l$. 
Moreover, it pessimistically assumes that each processing time in $I_i$ takes the largest possible
value $(1+\eps')^i\eps'$. Hence, initially $A_v$ computes an optimal schedule $S_v^*$ for a job 
sequence consisting of $v_i$ jobs with a processing time of $(1+\eps')^i\eps'$, for $i=1, \ldots, l$. Small
jobs are ignored. Since running time is not an issue in the design of online algorithms, such a
schedule $S_v^*$ can be computed exactly. Alternatively, an $(1+\eps')$-approximation to the
optimal schedule can be computed using the PTAS by Hochbaum and Shmoys~\cite{HS}.
Let $n_i^*(j)$ denote the number of jobs with a processing time of $(1+\eps')^i\eps'\in I_i$ assigned 
to machine $M_j$ in $S_v^*$, where $1\leq i \leq l$ and $1\leq j \leq m$. Moreover, let
$\ell^*(j) = \sum_{i=1}^l n_i^*(j) (1+\eps')^i\eps'$ be the load on machine $M_j$ in $S_v^*$, 
$1\leq j\leq m$. 

When processing the actual job sequence $\sigma$ and constructing a real schedule $S_v$, $A_v$ uses
$S_v^*$ as a guideline to make scheduling decisions. At any time during the scheduling process, let
$n_i(j)$ be the number of jobs with a processing time in $I_i$ that have already been assigned to
machine $M_j$ in $S_v$, where again $1\leq i \leq l$ and $1\leq j \leq m$. Each incoming job $J_t$,
$1\leq t\leq n$, is handled as follows. If $J_t$ is large, then let $I_i$ with $1\leq i \leq l$ be the 
interval such that $p_t\in I_i$. Algorithm $A_v$ checks if there is a machine $M_j$ such that 
$n_i^*(j) - n_i(j) >0$, i.\,e.\ there is a machine that can still accept a job with a processing time
in $I_i$ as suggested by the optimal schedule $S_v^*$. If such a machine $M_j$ exists, then $J_t$ is
assigned to it; otherwise $J_t$ is scheduled on an arbitrary machine. If $J_t$ is small, then $J_t$ is 
assigned to a machine $M_j$ with the smallest current value $\ell^*(j) + \ell_s(j)$. Here
$\ell_s(j)$ denotes the current load on machine $M_j$ caused by small jobs in $S_v$. 
A summary of ${\cal A}_1(\eps)$ is given in Figure~\ref{fig:2}. Subsequently we show Theorem~\ref{th:guess2}.

\begin{figure}[h]
\fbox{
\begin{minipage}{11.7cm}
{\bf Algorithm ${\cal A}_1(\eps)$} \\[-15pt]
\begin{itemize}
\item[1.] ${\cal A}_1(\eps) = \{A_v\}_{v\in V}$, where 
 $V=\{(v_1,\dots,v_l) \in \mathbb{N}^l_0 \mid  v_i \leq \lfloor m/\eps' \rfloor \}$\\
 with $\eps'= \eps/2$ and $l={\lceil \log(1/\eps') / \log(1+\eps') \rceil}$.\\[-8pt]
\item[2.] $A_v$ works as follows.\\[-8pt]
\begin{itemize}
\item[(a)] Compute optimal schedule $S^*_v$ for input consisting of $v_i$ jobs of processing time
$(1+\eps')^i\eps'$, $1\leq i \leq l$.\\[-8pt]
\item[(b)] In $S_v$ each $J_t$ is sequenced in the following way.\\
If $p_t >\eps'$, then determine $I_i$ such that $p_t\in I_i$. If $\exists\  M_j$ with 
 $n_i^*(j) - n_i(j)>0$, then assign $J_t$ to it; otherwise assign $J_t$ to an arbitrary machine.\\
If $p_t \leq \eps'$, then assign $J_t$ to $M_j$ with the smallest value $\ell^*(j) + \ell_s(j)$.\\[-8pt]
\end{itemize}
\end{itemize}
\end{minipage} 
}
\caption{The algorithm ${\cal A}_1(\eps)$}\label{fig:2}
\end{figure}

\begin{theorem}\label{th:guess2}
For any $\eps> 0$, ${\cal A}_1(\eps)$ is $(1+\eps)$-competitive and uses at most $(\lfloor 2m/\eps\rfloor +1)^{\lceil \log(2/\eps) / \log(1+\eps/2) \rceil }$ schedules.
\end{theorem}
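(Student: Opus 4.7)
The schedule count is immediate from the definition of $V$: $|V|=(\lfloor m/\eps'\rfloor+1)^l$ with $\eps'=\eps/2$ and $l=\lceil\log(2/\eps)/\log(1+\eps/2)\rceil$, which matches the claimed bound. For the competitive ratio, I would first rescale so that $\opt(\sigma)=1$ and then focus on the ``correct'' guess $v^\sigma=(v_1^\sigma,\dots,v_l^\sigma)$, where $v_i^\sigma$ is the true number of jobs of $\sigma$ whose processing times lie in $I_i$. Since every large job has processing time greater than $\eps'$ and $\opt(\sigma)=1$, there are at most $m/\eps'$ large jobs in $\sigma$, so $v^\sigma\in V$ and the algorithm $A_{v^\sigma}$ lies in the pool. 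It therefore suffices to show that the schedule $S_{v^\sigma}$ has makespan at most $1+\eps$.

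The next step is to bound the template $S_{v^\sigma}^*$. Because $v^\sigma$ matches the real counts, the rounded input used to build $S_{v^\sigma}^*$ is obtained from the actual large jobs of $\sigma$ by inflating each processing time by a factor of at most $1+\eps'$. Lifting an optimal schedule of the actual large jobs (makespan $\le 1$) to the rounded input yields a feasible schedule of makespan at most $1+\eps'$, hence $\max_j\ell^*(j)\le 1+\eps'$. I would then observe that every large job of $\sigma$ is actually placed into a template slot: before the $k$-th arrival in $I_i$ only $k-1<v_i^\sigma=\sum_jn_i^*(j)$ template slots are used, so pigeonhole gives a machine with $n_i^*(j)-n_i(j)>0$. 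Consequently, after the sequence ends each machine $M_j$ holds exactly $n_i^*(j)$ jobs from $I_i$ whose true processing times are bounded by the rounded value $(1+\eps')^i\eps'$, so the final large-job load of $M_j$ in $S_{v^\sigma}$ is at most $\ell^*(j)$.

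The crux is to handle the small-job contribution on the maximum-load machine. Let $M_{j^*}$ realise the final makespan of $S_{v^\sigma}$. If $M_{j^*}$ carries no small job, its load is at most $\ell^*(j^*)\le 1+\eps'\le 1+\eps$. Otherwise let $J_t$ be the last small job placed on $M_{j^*}$, with $p_t\le\eps'$. By the greedy rule, at that moment $\ell^*(j^*)+\ell_s(j^*)\le\frac{1}{m}\sum_j(\ell^*(j)+\ell_s(j))$. Since every job in $I_i$ has true processing time greater than $(1+\eps')^{i-1}\eps'$, one has $\sum_j\ell^*(j)=\sum_iv_i^\sigma(1+\eps')^i\eps'\le(1+\eps')$ times the total large-job volume in $\sigma$; together with the small-job volume this gives $\sum_j(\ell^*(j)+\ell_s(j))\le(1+\eps')\sum_tp_t\le(1+\eps')m$. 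Thus $\ell^*(j^*)+\ell_s(j^*)\le 1+\eps'$. The final load of $M_{j^*}$ is $\ell_{\text{actual}}(j^*)+\ell_s^{\text{final}}(j^*)\le\ell^*(j^*)+\ell_s(j^*)+p_t\le 1+\eps'+\eps'=1+\eps$, using that the real large-job load never exceeds $\ell^*(j^*)$ and that no small job is added to $M_{j^*}$ after $J_t$.

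The main obstacle I foresee is the joint analysis of large and small jobs on the heaviest machine. Both $\ell^*(j)$ and $\ell_s(j)$ evolve during the online process, large jobs may still land on $M_{j^*}$ after $J_t$, and the greedy balancing uses the \emph{virtual} load $\ell^*(j)+\ell_s(j)$ rather than the true one. The resolution is the observation that this virtual load dominates the eventual real load (via $\ell_{\text{actual}}(j)\le\ell^*(j)$), so the usual Graham-style argument can be applied to it, and the volume bound on $\sum_j\ell^*(j)$ only loses the factor $1+\eps'$ that is absorbed into the final $1+\eps$ guarantee.
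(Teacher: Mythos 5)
Your proof is correct and follows essentially the same route as the paper: rescale to $\opt(\sigma)=1$, focus on the vector $v$ matching the true class counts of $\sigma$, bound $\max_j\ell^*(j)\le 1+\eps'$ via rounding an optimal schedule, show large jobs always find a template slot, and then bound the virtual load $\ell^*(j)+\ell_s(j)$ of the machine receiving each small job. The only cosmetic difference is that the paper phrases the small-job step as a proof by contradiction (if every machine had $\ell^*(j)+\ell_s(j)>(1+\eps')\opt$, the total volume would exceed $m\cdot\opt$), whereas you bound the minimum virtual load directly by the average; these are the same calculation.
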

\begin{proof}
The bound on the number of schedules simply follows from the fact that ${\cal A}_1(\eps)$ maintains
$|V|= (\lfloor m/\eps' \rfloor +1)^l$ schedules where $\eps'= \eps/2$ and 
$l={\lceil \log(1/\eps') / \log(1+\eps') \rceil}$.

Let $\sigma$ be an arbitrary job sequence and let $v_i$ be the number of jobs with a processing time in 
$I_i$, for $i=1, \ldots, l$. Since any $v_i$ is upper bounded by $\lfloor m/\eps' \rfloor$, the resulting
vector $v = (v_1, \ldots, v_l)$ is in $V$. For this vector $v$, consider the associated algorithm $A_v$.
We prove that when $A_v$ has finished processing $\sigma$, the resulting schedule $S_v$ has a makespan
of at most $(1+\eps) = (1+\eps)\opt(\sigma)$. Recall again that we assume without loss of generality
that $\opt(\sigma)=1$.

We analyze the steps in which $A_v$ assigns jobs $J_t$, $1\leq t\leq n$, to machines in $S_v$. If $J_t$ is
large with $p_t\in I_i$, $1\leq i\leq l$, then there must exist a machine $M_j$ in the current schedule
$S_v$ such that $n_i^*(j) - n_i(j)>0$. Algorithm $A_v$ will assign $J_t$ to such a machine. Hence after
the processing of $\sigma$, for any $M_j$ in $S_v$, the total load caused by large jobs is upper bounded
by $\ell^*(j)$. We next argue that this value is at most $(1+\eps')\opt(\sigma)$. Consider an optimal
schedule $S_{\rm opt}$ for $\sigma$. Modify this schedule by (a)~deleting all small jobs and (b)~rounding
each job processing time in $I_i$ to $(1+\eps')^i\eps'$, for $i=1,\ldots, l$. The resulting schedule 
schedule $S'_{\rm opt}$ has a makespan of at most $(1+\eps')\opt(\sigma)$. Furthermore $S'_{\rm opt}$ is
a schedule for an input sequence consisting of $v_i$ jobs of processing time $(1+\eps')^i\eps'$. Since
$S^*_v$ is an optimal schedule for this input, each machine load $\ell^*(j)$ is upper bounded by 
$(1+\eps')\opt(\sigma)$.

We finally show that when $A_v$ has to sequence a small job $J_t$, then there is a machine $M_j$ 
such that $\ell^*(j) + \ell_s(j)$ is upper bounded by $(1+\eps')\opt(\sigma)$. This implies that the
assignment of $J_t$ causes a machine load of at most $(1+\eps')\opt(\sigma) +p_t \leq (1+2\eps')\opt(\sigma) =
(1+ \eps)\opt(\sigma)$ in the final schedule $S_v$. 

So suppose that upon the arrival of a small job $J_t$ there holds $\ell^*(j) + \ell_s(j) > (1+\eps')\opt(\sigma)$
for all machines $M_j$, $1\leq j \leq m$. Recall that $\ell_s(j)$ is the load on machine $M_j$ caused by small
jobs in the current schedule $S_v$. Note that $\sum_{j=1}^m \ell^*(j)$ is the total processing time of large jobs
in $\sigma$ if processing times in $I_i$ are rounded up to $(1+\eps')^i\eps'$, for $i=1, \ldots, l$. 
Hence $1/(1+\eps)\sum_{j=1}^m \ell^*(j)$ is a lower bound on the total processing time of large jobs in $\sigma$.
It follows that the total processing time of all jobs in $\sigma$ is at least
$1/(1+\eps') \sum_{j=1}^m \ell^*(j) + \sum_{j=1}^m \ell_s(j) +p_t \geq 
1/(1+\eps') \sum_{j=1}^m (\ell^*(j) + \ell_s(j)) +p_t$. The assumption that 
$\ell^*(j) + \ell_s(j) > (1+\eps')\opt(\sigma)$ holds for all machines $M_j$ implies that
the total processing time of jobs in $\sigma$ is at least $m\cdot \opt(\sigma) + p_t > m\cdot \opt(\sigma)$, 
which contradicts the fact that $\opt(\sigma)$ is the optimum makespan.  \hspace*{\fill}{$\Box$}
\end{proof}

\section{A {$\mathbf{(4/3+\eps)}$}-competitive algorithm for \MPSO}\label{sec:4/3}

We develop an algorithm  ${\cal A}_2(\eps)$ for \MPSO\ that is $(4/3+\eps)$-competitive, for any $0<\eps \leq 1$, if
the number $m$ of machines is not too small. We then combine ${\cal A}_2(\eps)$ with ${\cal A}_1(\eps)$,
presented in the last section, and derive a strategy ${\cal A}_3(\eps)$ that is $(4/3+\eps)$-competitive,
for arbitrary $m$. The number of required schedules is $1/\eps^{O(\log (1/\eps))}$, which is a constant independent of
$n$ and $m$. We firstly present a description of the algorithm; the corresponding analysis is given thereafter.

Before describing ${\cal A}_2(\eps)$ in detail, we explain the main ideas of the algorithm. One concept
is identical to that used by ${\cal A}_1(\eps)$: Partition the range of possible job processing times into
intervals or {\em job classes\/} and consider distributions of jobs over these classes. However, in order
to achieve a constant number of schedules we have to refine this scheme and incorporate new
ideas. First, the job classes have to be chosen properly so as to allow a compact packing of jobs on the machines.
An important, new aspect in the construction of ${\cal A}_2(\eps)$ is that we will not consider 
the entire set $V$ of tuples specifying how large jobs of an input sequence~$\sigma$ are distributed over 
the job classes. Instead we will define a suitable sparsification $V'$ of $V$. Each $v\in V'$ represents 
an estimate or guess on the number of large jobs arising in $\sigma$. More specifically, if 
$v=(v_1, \ldots, v_l)$, then it is assumed that $\sigma$ contains at least $v_i$ jobs with a processing
time of job class $i$. 

Obviously, the job sequence $\sigma$ may contain more large jobs, the exact number of which is unknown. 
Furthermore, it is unknown which portion of the total processing time of $\sigma$ will arrive as small 
jobs. In order to cope with these uncertainties ${\cal A}_2(\eps)$ has to construct robust schedules. 
To this end the number of machines is partitioned into two sets ${\cal M}_c$ and
${\cal M}_r$. For the machines of ${\cal M}_c$, the algorithm initially determines a good assignment or
{\em configuration\/} assuming that $v_i$ jobs of job class $i$ will arrive. The machines of ${\cal M}_r$
are reserve machines and will be assigned additional large jobs as they arise in $\sigma$. Small jobs
will always be placed on machines in ${\cal M}_c$. The initial configuration determined for these 
machines has the property that, no matter how many small jobs arrive, a machine load never exceeds $4/3+\eps$
times the optimum makespan.

We proceed to describe ${\cal A}_2(\eps)$ in detail. Let $0<\eps \leq 1$. Moreover, set $\eps' = \eps/8$.
Again we assume without loss of generality that, for an incoming job sequence, there holds $\opt(\sigma)=1$. 
Hence the processing time of any job is upper bounded by~1. 

{\bf Job classes:} A job $J_t$, $1\leq t \leq n$, is {\em small\/}
if $p_t \leq 1/3 + 2\eps'$; otherwise $J_t$ is {\em large\/}. We divide the range of possible job processing times
into job classes. Let $I_s = (0,1/3+2\eps']$ be the interval containing the processing times of small jobs.
Let $\lambda = \lceil \log({3\over 8} + {1\over 48\eps'})\rceil$ and $l = \lambda +2$, where the logarithm is
taken to base~2. For $i=1, \ldots, l$, let 
$$\textstyle a_i = \max\{{1\over 3} -2\eps' + ({1\over 12} + {3\over 2}\eps'){1\over 2^{\lambda+1-i}}, {1\over 3} + 2\eps'\}
\ \ \ \mbox{and} \ \ \ b_i = {1\over 3} -2\eps' + ({1\over 12} + {3\over 2}\eps'){1\over 2^{\lambda-i}}.$$
It is easy to verify that $a_1 = 1/3+2\eps'$ and $a_i < b_i$, for $i=1, \ldots, l$. Furthermore
$b_{l-1} = 1/2 +\eps'$ and $b_l = 2/3 + 4\eps'$. For $i=1, \ldots, l$ define
$I_i = (a_i,b_i]$. There holds $\bigcup_{1\leq i \leq l} I_i = (1/3+2\eps', 2/3+4\eps']$. Moreover, for 
$i=1, \ldots, l-1$, let $I_{l+i} = (2a_i,2b_i]$. Intuitively, $I_{l+i}$ contains the processing times that
are twice as large as those in $I_i$, $1\leq i \leq l-1$. There holds 
$\bigcup_{1\leq i \leq l-1} I_{l+i} = (2/3+4\eps', 1+2\eps']$. Hence $I_s \cup I_1 \cup \ldots \cup I_{2l-1} =
(0,1+2\eps']$. In the following $I_i$ represents {\em job class\/} $i$, for $i=1, \ldots, 2l-1$. We say that
$J_t$ is a {\em class-$i$ job\/} if $p_t\in I_i$, where $1\leq i \leq 2l-1$.

{\bf Definition of target configurations:} As mentioned above, for any incoming job sequence $\sigma$, 
${\cal A}_2(\eps)$ works with estimates on the number of class-$i$ jobs arising in $\sigma$, $1\leq i \leq 2l-1$. 
For each estimate, the algorithm initially determines a virtual schedule or {\em target configuration\/} on a 
subset of the machines, assuming that the estimated set of large jobs will indeed arrive. Hence we partition the 
$m$ machines into two sets ${\cal M}_c$ and ${\cal M}_r$. Let $\mu = \lceil (1+\eps')/(1+2\eps')\cdot m\rceil$. Moreover, 
let ${\cal M}_c =\{M_1, \ldots, M_{\mu}\}$ and ${\cal M}_r =\{M_{\mu+1}, \ldots, M_m\}$. Set ${\cal M}_c$ 
contains the machines for which a target configuration will be computed; ${\cal M}_r$ contains the 
reserve machines. The proportion of  $|{\cal M}_r|$ to $|{\cal M}_c|$  is roughly $1:1+1/\eps'$.

A target configuration has the important property that any machine $M_j\in {\cal M}_c$ contains large jobs
of only one job class $i$, $1\leq i \leq 2l-1$. Therefore, a target configuration is properly defined
by a vector $c = (c_1, \ldots, c_{\mu}) \in \{0,\ldots,2l-1\}^{\mu}$. If $c_j=0$, then $M_j$ does not contain 
any large jobs in the target configuration, $1\leq j\leq \mu$. If  $c_j=i$, where 
$i\in \{1,\ldots, 2l-1\}$, then $M_j$ contains class-$i$ jobs, $1\leq j\leq \mu$. The vector $c$ 
implicitly also specifies how many large jobs reside on a machine. If $c_j=i$ with $1\leq i \leq l$, then
$M_j$ contains two class-$i$ jobs. Note that, for general $i\in \{1,\ldots, l\}$, a third job cannot be 
placed on the machine without exceeding a load bound of $4/3+\eps$. If $c_j=i$ with $l+1\leq i \leq 2l-1$, then
$M_j$ contains one class-$i$ job. Again, the assignment of a second job is not feasible in general. 
Given a configuration $c$, $M_j$ is referred to as a {\em class-$i$ machine} if $c_j=i$, where
$1\leq j \leq \mu$ and $1\leq i \leq 2l-1$. 

With the above interpretation of target configurations, each vector  $c = (c_1, \ldots, c_{\mu})$ encodes
inputs containing $2|\{c_j \in\{c_1,\ldots c_{\mu}\} : c_j=i\}|$ class-$i$ jobs, for $i=1,\ldots, l$, as
well as  $|\{c_j\in \{c_1,\ldots c_{\mu}\} : c_j=i\}|$ class-$i$ jobs, for $i=l+1,\ldots, 2l-1$.  Hence,
for an incoming job sequence, instead of considering estimates on the number of class-$i$ jobs, for any
$1\leq i \leq 2l-1$, we can equivalently consider target configurations. Unfortunately, it
will not be possible to work with all target configurations $c\in \{0,\ldots,2l-1\}^{\mu}$ since the
resulting number of schedules to be constructed would be $(2l)^{\mu} = (\log(1/\eps))^{\Omega(m)}$.
Therefore, we will work with a suitable sparsification of the set of all configurations.

{\bf Sparsification of the set of target configurations:} Let $\kappa = \lceil 2(2+1/\eps')(2l-1)\rceil$ and
$U = \{0.\ldots, \kappa\}^{2l-1}$. We will show that 
$\kappa \lfloor (m-\mu)/(2l-1) \rfloor \geq m$ if $m$ is not too small (see Lemma~\ref{lem:kappa}).
This property in turn will ensure that any job sequence $\sigma$ can be mapped to a $u\in U$.
For any vector $u = (u_1, \ldots, u_{2l-1})\in U$, we define a target configuration
$c(u)$ that contains $u_i \lfloor (m-\mu)/(2l-1) \rfloor$ class-$i$ machines, for $i=1, \ldots, 2l-1$, 
provided that $\sum_{i=1}^{2l-1} u_i \lfloor (m-\mu)/(2l-1) \rfloor$ does not exceed $\mu$. More specifically,
for any $u = (u_1, \ldots, u_{2l-1})\in U$, let $\pi_0=0$ and $\pi_i = \sum_{j=1}^i u_j \lfloor (m-\mu)/(2l-1) \rfloor$,
be the partial sums of the first $i$ entries of $u$, multiplied by $\lfloor (m-\mu)/(2l-1) \rfloor$, for
$i=1, \ldots, 2l-1$. Let $\mu'= \pi_{2l-1}$. First construct a vector $c'(u) = (c'_1, \ldots, c'_{\mu'})$
of length $\mu'$ that contains exactly $u_i \lfloor (m-\mu)/(2l-1) \rfloor$ class-$i$ machines.
That is, for $i=1, \ldots, 2l-1$, let $c'_j = i$ for $j=\pi_{i-1}+1, \ldots, \pi_i$. We now truncate
or extend $c'(u)$ to obtain a vector of length $\mu$. If $\mu'\geq \mu$, then $c(u)$ is the vector consisting
of the first $\mu$ entries of $c'(u)$. If $\mu'< \mu$, then $c(u) = (c'_1, \ldots, c'_{\mu'}, 0, \ldots, 0)$, i.\,e.\
the last $\mu-\mu'$ entries are set to~0. Let $C= \{c(u) \mid u\in U\}$ be the set of all target configurations
constructed from vectors $u\in U$.

{\bf The algorithm family:} Let ${\cal A}_2(\eps) = \{A_c\}_{c\in C}$. For any $c\in C$, algorithm
$A_c$ works as follows. Initially, prior to the arrival of any job of $\sigma$, $A_c$ determines
the target configuration specified by $c= (c_1, \ldots, c_{\mu})$ and uses this virtual schedule for
the machines of ${\cal M}_c$ to make scheduling decisions. Consider a machine $M_j\in {\cal M}_c$ and
suppose $c_j >0$, i.\,e.\ $M_j$ is a class-$i$ machine  for some $i\geq 1$. Let $\ell^-(j)$ and $\ell^+(j)$
be the targeted minimal and maximal loads caused by large jobs on $M_j$, according to the target
configuration. More precisely, if $i\in \{1,\ldots, l\}$, then  $\ell^-(j) = 2a_i$ and  $\ell^+(j)=2b_i$. 
Recall that in a target configuration a class-$i$ machine contains two class-$i$ jobs if $1\leq i \leq l$.
If $i\in \{l+1,\ldots, 2l-1\}$ and hence $i = l+i'$ for some $i'\in \{1,\ldots, l-1\}$, then  $\ell^-(j) = 2a_{i'}$ 
and  $\ell^+(j)=2b_{i'}$. If $M_j \in {\cal M}_c$ is 
a machine with $c_j = 0$, then $ \ell^-(j)  = \ell^+(j) = 0$. While the job sequence $\sigma$ is processed,
a machine $M_j\in {\cal M}_c$ may or may not be {\em admissible\/}. Again assume that $M_j$ is a class-$i$
machine with $i\geq 1$. If $i\in \{1, \ldots, l\}$, then at any time during the scheduling process $M_j$ is
admissible if it has received less than two class-$i$ jobs so far. Analogously, if  $i\in \{l+1, \ldots, 2l-1\}$,
then $M_j$ is admissible if it has received no class-$i$ job so far. Finally, at any time during the
scheduling process, let $\ell(j)$ be the current load of machine $M_j$ and let $\ell_s(j)$ be the
load due to small jobs, $1\leq j \leq m$.

Algorithm $A_c$ schedules each incoming job $J_t$, $1\leq t \leq n$, in the following way. First assume
that $J_t$ is a large job and, in particular, a class-$i$ job, $1\leq i \leq 2l-1$. The algorithm checks
if there is a class-$i$ machine in ${\cal M}_c$ that is admissible. If so, $J_t$ is assigned to such a machine.
If there is no admissible class-$i$ machine available, then $J_t$ is placed on a machine in ${\cal M}_r$.
There jobs are scheduled according to the {\em Best-Fit\/} policy. More specifically, $A_c$ checks if there exists
a machine $M_j\in {\cal M}_r$ such that $\ell(j) +p_t \leq 4/3+\eps$. If this is the case, then $J_t$ is
assigned to such a machine with the largest current load $\ell(j)$. If no such machine exists, $J_t$
is assigned to an arbitrary machine in ${\cal M}_r$. Next assume that $J_t$ is small. The job is a assigned
to a machine in ${\cal M}_c$, where preference is given to machines that have already received small jobs. 
Algorithm $A_c$ checks if there is an $M_j\in {\cal M}_c$ with $\ell_s(j) >0$ such that 
$\ell^+(j) + \ell_s(j) +p_t \leq 4/3+\eps$. If this is the case, then $J_t$ is assigned to any such machine.
Otherwise $A_c$ considers the machines of ${\cal M}_c$ which have not yet received any small jobs. If there exists
an $M_j\in {\cal M}_c$ with $\ell_s(j) = 0$ such that $\ell^+(j) +p_t \leq 4/3+\eps$, then among these
machines $J_t$ is assigned to one having the smallest targeted load $\ell^-(j)$. 
If again no such machine exists,  $J_t$ is assigned to an arbitrary machine in ${\cal M}_c$. A summary of  
${\cal A}_2(\eps)$, which focuses on the job assignment rules, is given in Figure~\ref{fig:3}. We obtain the following result.

\begin{figure}[h]
\fbox{
\begin{minipage}{11.7cm}
{\bf Algorithm ${\cal A}_2(\eps)$} \\[-15pt]
\begin{itemize}
\item[1.] ${\cal A}_2(\eps) = \{A_c\}_{c\in C}$, where $C=\{c(u) \mid u\in U \}$\\
 $U = \{0,\ldots, \kappa\}^{2l-1}$, where $\kappa = \lceil 2(2+1/\eps')(2l-1)\rceil$,  
 $l = \lceil \log({3\over 8} + {1\over 48\eps'})\rceil +2$ and $\eps'= \eps/8$\\
 $\mu = \lceil (1+\eps')/(1+2\eps')\cdot m\rceil$\\[-8pt]
\item[2.] $A_c$ works as follows.\\[-8pt]
\begin{itemize}
\item[(a)] Determine target configuration specified by $c = (c_1, \ldots, c_{\mu})$.\\[-8pt]
\item[(b)] Each $J_t$ is sequenced as follows.\\
{\em $J_t$ is large:\/} Let $J_t$ be a class-$i$ job. If there is an admissible class-$i$ machine
in ${\cal M}_c$, assign $J_t$ to it. Otherwise check if $\exists$ $M_j \in {\cal M}_r$ such that 
$\ell(j) + p_t \leq 4/3+\eps$. If so, assign $J_t$ to such an $M_j$ with the highest $\ell(j)$; otherwise place 
$J_t$ on an arbitrary $M_j\in {\cal M}_r$.\\
{\em $J_t$ is small:\/} If $\exists$ $M_j\in {\cal M}_c$ with $\ell_s(j) >0$ such that  $\ell^+(j) + \ell_s(j)+p_t
\leq 4/3+\eps$, assign $J_t$ to it. Otherwise check if $\exists$ $M_j \in {\cal M}_c$ with $\ell_s(j) =0$ such 
that  $\ell^+(j) + p_t \leq 4/3+\eps$. If so, assign $J_t$ to such an $M_j$ with the lowest $\ell^-(j)$; otherwise place 
$J_t$ on an arbitrary $M_j\in {\cal M}_c$.\\[-8pt]
\end{itemize}
\end{itemize}
\end{minipage} 
}
\caption{The algorithm ${\cal A}_2(\eps)$}\label{fig:3}
\end{figure}

\begin{theorem}\label{th:alg2}
${\cal A}_2(\eps)$ is $(4/3+\eps)$-competitive, for any $0< \eps \leq 1$ and $m \geq 2l/(\eps')^2$. The algorithm
uses $1/\eps^{O(\log (1/\eps))}$ schedules. 
\end{theorem}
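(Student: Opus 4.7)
The plan is to verify separately the schedule count and the competitive bound. Counting schedules is the easy half: $|C|\le |U|=(\kappa+1)^{2l-1}$, and with $l=\lceil\log({3\over 8}+{1\over 48\eps'})\rceil+2=O(\log(1/\eps))$ and $\kappa=O(l/\eps')=O(\log(1/\eps)/\eps)$, a direct logarithmic calculation yields $|U|=(1/\eps)^{O(\log(1/\eps))}$.

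For the competitive ratio, fix a sequence $\sigma$ with $\opt(\sigma)=1$, let $n_i$ be the number of class-$i$ jobs in $\sigma$, and set $q=\lfloor(m-\mu)/(2l-1)\rfloor$. I would first prove the promised Lemma~\ref{lem:kappa}: the assumption $m\ge 2l/(\eps')^2$, together with the arithmetic bound $m-\mu\ge \eps'm/(1+2\eps')$ obtained from the definition of $\mu$, gives $\kappa q\ge m$. Then I would exhibit a canonical $u^*\in U$ by setting $u_i^*=\min\{\kappa,\lfloor n_i/(2q)\rfloor\}$ for $i\le l$ and $u_i^*=\min\{\kappa,\lfloor n_i/q\rfloor\}$ for $i>l$; the volume bound $\sum_i n_i\le 3m$ (each large job has processing time $>1/3$) combined with $\kappa q\ge m$ shows that the cap at $\kappa$ is never binding, and a separate counting check establishes $\sum_i u_i^* q\le \mu$, so that $c^*=c(u^*)$ really contains $u_i^* q$ class-$i$ machines in ${\cal M}_c$. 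Running $A_{c^*}$ on $\sigma$, the large jobs placed on ${\cal M}_c$ keep each machine within $[\ell^-(j),\ell^+(j)]$, and the design of the intervals $I_i$ (in particular $2b_l=2/3+8\eps'$) forces $\ell^+(j)\le 4/3+\eps$. The number of large jobs diverted to ${\cal M}_r$ is at most $(2l-1)\cdot 2q\le 2(m-\mu)$, each of size $\le 1$, so a classical Best-Fit argument combined with the global volume bound $\sum_{t'}p_{t'}\le m$ shows that every reserve machine also finishes at load $\le 4/3+\eps$.

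The main obstacle is showing that small jobs cannot overflow ${\cal M}_c$. Suppose for contradiction that $A_{c^*}$ has to fall back to an arbitrary placement for some small job $J_t$; then both branches of the small-job rule failed, so for every $M_j\in{\cal M}_c$ either $\ell_s(j)>0$ and $\ell^+(j)+\ell_s(j)>4/3+\eps-p_t$, or $\ell_s(j)=0$ and $\ell^+(j)>4/3+\eps-p_t$. Since a class-$i$ machine may in principle still be missing some of its targeted large jobs, a naive summation using $\ell^+(j)$ would overcount the mass actually arrived; the delicate step is to use the priority invariant built into the rule --- prefer machines already holding small jobs, and among fresh machines pick one with the smallest $\ell^-(j)$ --- to show that any $M_j$ currently carrying small jobs has already received its full large-job complement, while the geometric spacing $\ell^+(j)-\ell^-(j)=O(\eps'/2^{\lambda-i})$ makes the gap between $\ell^+(j)$ and the true load on a machine with $\ell_s(j)=0$ absorbable into the $\eps$-slack. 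Summing the resulting lower bounds over ${\cal M}_c$, adding the reserve-machine loads and $p_t$ itself, the total processing volume of $\sigma$ strictly exceeds $m\cdot\opt(\sigma)=m$, a contradiction that completes the proof.
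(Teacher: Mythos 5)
The high-level plan — count schedules via $|U|=(\kappa+1)^{2l-1}$, exhibit a canonical configuration $c^*$, then separately bound loads in ${\cal M}_c$ and ${\cal M}_r$ — matches the paper's structure, and the schedule-counting half and the general idea of a volume contradiction for small jobs are sound. However, there are three genuine gaps in the details.

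First, your claim that $\sum_i u_i^* q\le\mu$ (so that $c(u^*)$ literally contains $u_i^* q$ class-$i$ machines) is not true in general. Lemma~\ref{lem:jobs} only gives $\sum_{i\le l} n_i/2 + \sum_{i>l} n_i\le m$, and $\mu<m$, so the sum can exceed $\mu$. The algorithm explicitly handles this by truncating $c'(u)$ to the first $\mu$ entries, and the paper's Lemma~\ref{lem:config} must treat the cases $\mu'\ge\mu$ and $\mu'<\mu$ separately to verify condition~(iii). Your proof silently assumes the untruncated case; the truncated case needs a different argument (subtracting $\mu_1+\mu_2=\mu$ from Lemma~\ref{lem:jobs}).

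Second, the reserve-machine argument as stated does not close. A ``classical Best-Fit argument combined with the global volume bound $\sum p_{t'}\le m$'' is not enough to bound individual machine loads in ${\cal M}_r$ by $4/3+\eps$: a volume bound controls the average, not the maximum. The paper instead uses condition~(iii) of valid target configurations (which you never state), namely $\lceil(\sum_{i\le l}n_i-2\mu_1)/2\rceil+\sum_{i>l}n_i-\mu_2\le m-\mu$, together with the pairing observation that any two class-$i$ jobs with $i\le l$ fit on one machine (total $\le 2b_l=4/3+\eps$) while class-$i$ jobs with $i>l$ get their own machine. This is a counting argument guaranteeing an empty or half-full reserve machine always exists; it is the crux and cannot be replaced by a volume bound. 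Your quantity ``$(2l-1)\cdot 2q$'' also does not bound the diverted jobs in the truncated case.

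Third, for small-job overflow you propose to show that ``any $M_j$ currently carrying small jobs has already received its full large-job complement.'' That invariant is false — small jobs may be assigned to a class-$i$ machine before its two class-$i$ jobs arrive — and it is also unnecessary. The paper instead exploits validity of $c$: conditions~(i)--(ii) guarantee that $\sigma$ contains at least the targeted number of class-$i$ jobs, so the total processing volume of large jobs in $\sigma$ (arrived or not) is $\ge\sum_{j\le\mu}\ell^-(j)$. Combined with Lemma~\ref{lem:sched1} (at most one machine with $\ell_s(j)>0$ and $\ell^-(j)+\ell_s(j)<1+\eps'$) and the check that only class-$l$ machines can reject a small job among the $\ell_s=0$ machines, one sums $\ell^-(j)+\ell_s(j)\ge 1+\eps'$ over $\mu-1$ machines and exceeds $m$. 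Replacing your ``full complement'' claim by this potential/volume argument would fix the gap.
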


${\cal A}_2(\eps)$ is $(4/3+\eps)$-competitive if, for the chosen $\eps$, the number of machines is at least
$2l/(\eps')^2$. If the number of machines is smaller, we can simply apply algorithm ${\cal A}_1(\eps)$
with an accuracy of $\eps_0 = 1/3$. Let ${\cal A}_3(\eps)$ be the following combined algorithm. If for the
chosen $\eps$, $m< 2l/(\eps')^2$, execute ${\cal A}_1(1/3)$. Otherwise execute ${\cal A}_2(\eps)$.

\begin{corollary}\label{cor:A3}
${\cal A}_3(\eps)$ is $(4/3+\eps)$-competitive, for any $0< \eps \leq 1$, and uses $1/\eps^{O(\log (1/\eps))}$ 
schedules. 
\end{corollary}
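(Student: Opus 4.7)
The plan is a straightforward case analysis based on how ${\cal A}_3(\eps)$ is defined relative to the threshold $m_0 := 2l/(\eps')^2$, where $\eps' = \eps/8$ and $l = \lceil \log(3/8 + 1/(48\eps'))\rceil + 2 = O(\log(1/\eps))$.

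\textbf{Case 1: $m \geq m_0$.} Here ${\cal A}_3(\eps)$ runs ${\cal A}_2(\eps)$, and Theorem~\ref{th:alg2} applies directly: the competitive ratio is $4/3+\eps$ and the number of schedules is $1/\eps^{O(\log(1/\eps))}$. Nothing further is required.

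\textbf{Case 2: $m < m_0$.} Here ${\cal A}_3(\eps)$ runs ${\cal A}_1(1/3)$. By Theorem~\ref{th:guess2} this algorithm is $(1+1/3) = 4/3$-competitive, which is at most $4/3+\eps$ since $\eps > 0$. It remains to bound the number of schedules. By Theorem~\ref{th:guess2}, ${\cal A}_1(1/3)$ uses at most $(\lfloor 6m\rfloor + 1)^{\lceil \log 6 / \log(7/6)\rceil}$ schedules, which is polynomial in $m$ with a constant exponent. Using $m < m_0 = O(\log(1/\eps)/\eps^2)$, this count becomes $(O(\log(1/\eps)/\eps^2))^{O(1)}$. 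The final step is to observe that this expression is dominated by $1/\eps^{O(\log(1/\eps))}$, since $\log(1/\eps)/\eps^2 \leq 1/\eps^{c\log(1/\eps)}$ for any constant $c>0$ and all sufficiently small $\eps$, and both factors can be absorbed into the $O(\log(1/\eps))$ exponent.

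Combining the two cases, ${\cal A}_3(\eps)$ is $(4/3+\eps)$-competitive for every $0<\eps\leq 1$ and every value of $m$, and uses $1/\eps^{O(\log(1/\eps))}$ schedules in both cases. The only mildly nontrivial step is the schedule-count bookkeeping in Case~2, which reduces to observing that the $m$-dependent bound for ${\cal A}_1$ is harmless once $m$ has itself been capped by a function of $1/\eps$ of the required form. \hspace*{\fill}{$\Box$}
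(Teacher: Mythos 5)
Your proposal is correct and takes essentially the same route as the paper: case-split on whether $m$ is below or above the threshold $2l/(\eps')^2$, invoke Theorem~\ref{th:alg2} for the large-$m$ case, and observe that in the small-$m$ case the schedule count of ${\cal A}_1(1/3)$ — polynomial in $m$ with a constant exponent — becomes a function of $\eps$ alone once $m$ is capped. The paper's version simply concludes $1/\eps^{O(1)}$ for the second case (a slightly cleaner way to absorb it into the stated bound than your ``dominated by $1/\eps^{O(\log(1/\eps))}$ for small $\eps$'' observation), but the argument is the same.
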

\begin{proof}
If ${\cal A}_1(1/3)$ is executed for a machine number $m< 2l/(\eps')^2$, then by Theorem~\ref{th:guess2}
the number of schedules is $(\log (1/\eps)/\eps^3)^{O(1)}$, which is $1/\eps^{O(1)}$. \hspace*{\fill}{$\Box$}
\end{proof}

In the remainder of this section we prove Theorem~\ref{th:alg2}. The stated number of schedules follows from the 
fact that ${\cal A}_2(\eps)$  consists of $|C| = (\kappa+1)^{2l-1}$ algorithms. Recall that 
$\kappa = \lceil 2(2+1/\eps')(2l-1)\rceil$ and
$l = \lambda +2 = \lceil \log({3\over 8} + {1\over 48\eps'})\rceil +2$. Hence $l = O(\log (1/\eps))$
and $\kappa = O(1/\eps \log (1/\eps))$, which gives that $|C|$ is $1/\eps^{O(\log (1/\eps))}$.

Hence it suffices to show that, for any job sequence $\sigma$, ${\cal A}_2(\eps)$ generates
a schedule whose makespan is at most $(4/3+\eps)\opt(\sigma)$, which we will do in the remainder of this section. More specifically we will prove that, for any $\sigma$,
there exists a target configuration $c\in C$ that accurately models the large jobs arising in $\sigma$. We
will refer to such a vector as a valid target configuration. Then we will show that the corresponding algorithm 
$A_c$ builds a schedule with a makespan of at most $(4/3+\eps)\opt(\sigma)$. 

We introduce some notation. Consider any job sequence $\sigma$. For any $i$, $1\leq i \leq 2l-1$, let
$n_i(\sigma)$ be the number of class-$i$ jobs arising in $\sigma$, i.\,e.\ $n_i(\sigma)$ is the number of
jobs $J_t$ with $p_t\in I_i$. Furthermore, for any target configuration $c=(c_1, \ldots, c_{\mu})\in C$ and 
any $i$ with $1\leq i \leq 2l-1$, let $m_i$ be the number of class-$i$ machines in $c$, i.\,e.\ 
$m_i = |\{c_j\in \{c_1, \ldots, c_{\mu}\} : c_j = i\}|$. Let $\mu_1 = \sum_{i=1}^l m_i$ be the total 
number of class-$i$ machines with $i\in \{1, \ldots, l\}$. Similarly,  $\mu_2 = \sum_{i=l+1}^{2l-1} m_i$
is the total number of class-$i$ machines with $i\in \{l+1, \ldots, 2l-1\}$. Given $\sigma$, vector
$c\in C$ will be a valid target configuration if, for any $i=1, \ldots, 2l-1$, $\sigma$ contains as
many class-$i$ jobs as specified in $c$ and, moreover, if all the additional large jobs can be feasibly
scheduled on the $m-\mu$ reserve machines. Recall that in a configuration $c$, any class-$i$ machine
with $1\leq i \leq l$ is supposed to contain two class-$i$ jobs. Formally, $c\in C$ is a {\em valid
target configuration\/} if the following three conditions hold.
\begin{enumerate}[(i)]
 \item For $i=1,\ldots, l$, there holds $2m_i \leq n_i(\sigma)$.
 \item For $i=l+1,\ldots, 2l-1$, there holds $m_i \leq n_i(\sigma)$.
 \item $\lceil (\sum_{i=1}^l n_i(\sigma) -2\mu_1) / 2 \rceil + \sum_{i=l+1}^{2l-1} n_i(\sigma) - \mu_2 \leq
        m-\mu$
\end{enumerate}
Conditions (i) and (ii) represent the constraint that $\sigma$ contains as many class-$i$ jobs as specified
in $c$, $1\leq i \leq 2l-1$. Condition (iii) models the requirement that extra large jobs can be feasibly packed
on the reserve machines. Here $\sum_{i=1}^l n_i(\sigma) -2\mu_1$ is the extra number of class-$i$ jobs with
$i\in \{1, \ldots, l\}$ in $\sigma$. Any two of these can be packed on one machine since the processing time
of any of these jobs is upper bounded by $b_l \leq 2/3+4\eps'$. Hence two jobs incur a machine load of
at most $4/3+8\eps' = 4/3 +\eps$. Analogously, $\sum_{i=l+1}^{2l-1} n_i(\sigma) - \mu_2$ is the extra 
number of class-$i$ jobs with  $i\in \{l+1, \ldots, 2l-1\}$, which cannot be combined together because
their processing times are greater than $2a_1 \geq 2/3 + 4\eps'$. 

In order to prove that, for any $\sigma$, there exists a valid target configuration we need two lemmas.
\begin{lemma}\label{lem:jobs}
For any $\sigma$, there holds $\lceil \sum_{i=1}^l n_i(\sigma) / 2 \rceil + \sum_{i=l+1}^{2l-1} n_i(\sigma) \leq m$. 
\end{lemma}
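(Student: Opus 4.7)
The plan is to exploit the structure of the optimal schedule for $\sigma$. Without loss of generality assume $\opt(\sigma) = 1$, so every machine in the optimal schedule $S_{\rm opt}$ has load at most $1$. I would first record two processing-time lower bounds that fall straight out of the class definitions: any class-$i$ job with $1 \leq i \leq l$ has $p_t > a_i \geq 1/3 + 2\eps'$ (the max in the definition of $a_i$ guarantees this for every such $i$), while any class-$(l+i)$ job with $1 \leq i \leq l-1$ has $p_t > 2a_i \geq 2/3 + 4\eps'$.

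From these bounds I would derive three packing restrictions that any machine in $S_{\rm opt}$ must satisfy. First, at most one class-$(l+i)$ job sits on a single machine, since two such jobs would contribute load exceeding $4/3 > 1$. Second, a machine carrying a class-$(l+i)$ job cannot simultaneously carry any class-$i'$ job with $1 \leq i' \leq l$, because the combined load would exceed $2/3 + 4\eps' + 1/3 + 2\eps' = 1 + 6\eps' > 1$. Third, at most two class-$i$ jobs with $1 \leq i \leq l$ fit on one machine, since three such jobs would yield total load strictly greater than $3(1/3 + 2\eps') = 1 + 6\eps' > 1$.

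Given these restrictions, I would partition the $m$ machines of $S_{\rm opt}$ into the set $\mathcal{M}_1$ of machines holding (exactly one) class-$(l+i)$ job for some $i \in \{1, \ldots, l-1\}$, and the set $\mathcal{M}_2$ of all remaining machines, so that $|\mathcal{M}_1| + |\mathcal{M}_2| = m$. By the first two restrictions, all class-$(l+i)$ jobs lie on machines of $\mathcal{M}_1$, one per machine, giving $\sum_{i=l+1}^{2l-1} n_i(\sigma) \leq |\mathcal{M}_1|$. By the second and third restrictions, all class-$i$ jobs with $1 \leq i \leq l$ lie on machines of $\mathcal{M}_2$, at most two per machine, yielding $\sum_{i=1}^{l} n_i(\sigma) \leq 2|\mathcal{M}_2|$. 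Since $|\mathcal{M}_2|$ is an integer, dividing by two and rounding up gives $\lceil \sum_{i=1}^{l} n_i(\sigma)/2 \rceil \leq |\mathcal{M}_2|$, and adding the two bounds produces the claim.

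The work is essentially just careful bookkeeping with the class boundaries; there is no real obstacle here. The only place to be attentive is in verifying the processing-time lower bounds for classes $l+1, \ldots, 2l-1$, as this uses $a_{i'} \geq 1/3 + 2\eps'$ uniformly in $i'$, which in turn relies on the $\max$ in the definition of $a_{i'}$. Once this is in hand, the three packing restrictions and the machine partition immediately deliver the inequality.
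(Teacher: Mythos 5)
Your proof is correct and follows essentially the same approach as the paper's: working in an optimal schedule with $\opt(\sigma)=1$, using the lower bounds $p_t > 1/3+2\eps'$ for class-$\{1,\ldots,l\}$ jobs and $p_t > 2/3+4\eps'$ for class-$\{l+1,\ldots,2l-1\}$ jobs (both flowing from the $\max$ in the definition of $a_i$), and deducing that a machine can host at most one high-class job and nothing else, or at most two low-class jobs. The only difference is that you spell out the final bookkeeping (the $\mathcal{M}_1/\mathcal{M}_2$ partition and the rounding step) that the paper leaves implicit.
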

\begin{proof}
Consider any optimal schedule $S^*$ for $\sigma$ and recall that we assume without loss of generality that
$\opt(\sigma)=1$. In $S^*$ any machine containing a class-$i$ job with $i\in \{l+1, \ldots, 2l-1\}$ cannot
contain an additional large job: The class-$i$ job causes a load greater than $2a_1 \geq 2/3+4\eps'$ and
any additional large job, having a processing time greater than $1/3+2\eps'$, would generate a total load
greater than~1. Furthermore, any machine containing a class-$i$ job with $i\in \{1,\ldots, l\}$ can
contain at most one additional job of the job classes $1, \ldots, l$ because two further jobs 
would generate a total load greater than $3a_1 \geq 3(1/3+2\eps') >1$. \hspace*{\fill}{$\Box$}
\end{proof}

\begin{lemma}\label{lem:kappa}
For any $0<\eps'\leq 1/8$, there holds $\kappa \lfloor (m-\mu)/(2l-1)\rfloor \geq m$ if $m\geq 2l/(\eps')^2$. 
\end{lemma}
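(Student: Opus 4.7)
The plan is to treat this as essentially an arithmetic exercise: I want to strip away the ceiling in $\mu$ and the floor in the main expression, and then the rational factors in $\kappa$ will be designed precisely to cancel the rational factor appearing in $m-\mu$, yielding a clean leading term of $2m$. The correction terms will be of order $l/\eps'$, which the hypothesis $m\geq 2l/(\eps')^2$ is set up to absorb.

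First I would bound $m-\mu$ from below. Since $\mu = \lceil (1+\eps')/(1+2\eps')\cdot m\rceil \leq (1+\eps')/(1+2\eps')\cdot m + 1$, a direct subtraction gives
\[
m-\mu \ \geq\ \frac{\eps'}{1+2\eps'}\,m \;-\; 1.
\]
Next I would drop the floor by the standard estimate $\lfloor x\rfloor \geq x-1$, obtaining
\[
\bigl\lfloor (m-\mu)/(2l-1)\bigr\rfloor \ \geq\ \frac{\eps' m}{(1+2\eps')(2l-1)} \;-\; \frac{1}{2l-1} \;-\; 1.
\]

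Now I multiply by $\kappa$. The key observation is that $\kappa \geq 2(2+1/\eps')(2l-1) = 2(2l-1)(1+2\eps')/\eps'$, and this factor is chosen so that multiplying the first term above yields exactly $2m$. A short computation gives
\[
\kappa\bigl\lfloor (m-\mu)/(2l-1)\bigr\rfloor \ \geq\ 2m \;-\; \frac{2(1+2\eps')}{\eps'} \;-\; \frac{2(2l-1)(1+2\eps')}{\eps'} \ =\ 2m \;-\; \frac{4l(1+2\eps')}{\eps'}.
\]

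Finally I would invoke the hypothesis. Since $\eps'\leq 1/8$ we have $1+2\eps' \leq 5/4$, so $4l(1+2\eps')/\eps' \leq 5l/\eps'$; and the hypothesis $m\geq 2l/(\eps')^2$ together with $\eps'\leq 1/8 < 2/5$ gives $2l/(\eps')^2 \geq 5l/\eps'$. Chaining these inequalities, $4l(1+2\eps')/\eps' \leq m$, so the right-hand side above is at least $2m - m = m$, as required. The only subtle point is verifying that the two $1/\eps'$ factors in $\kappa$ and in $m-\mu$ truly cancel to produce a coefficient of exactly $2$ on $m$; once that is checked, everything else is slack, and the constant $2$ in the hypothesis $m\geq 2l/(\eps')^2$ is comfortably larger than what is strictly needed.
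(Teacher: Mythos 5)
Your proposal is correct and follows essentially the same route as the paper's proof: drop the ceiling in $\kappa$, use $\lfloor x\rfloor\geq x-1$ and $\lceil x\rceil\leq x+1$, observe that the $(1+2\eps')/\eps'$ factors cancel to give a leading term of $2m$ with an error of $4l(1+2\eps')/\eps'$, and then absorb the error with the hypothesis $m\geq 2l/(\eps')^2$ using $\eps'\leq 1/8$. The only cosmetic difference is that you spell out $1+2\eps'\leq 5/4$ to compare against $2/\eps'$, while the paper states the corresponding inequality $2l/(\eps')^2\geq (2/\eps')(1+2\eps')2l$ directly; the arithmetic is identical.
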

\begin{proof}
There holds
\begin{eqnarray*}
\kappa \lfloor (m-\mu)/(2l-1)\rfloor &\geq 
 & 2 \textstyle{(2+\frac{1}{\eps'})} (2l-1)   \cdot    \lfloor    (m-\mu)/(2l-1) \rfloor   \\
 & \geq & 2 \textstyle{(2+\frac{1}{\eps'})}  (2l-1)  \cdot   (  (m-\mu)/(2l-1)  -1  )   \notag \\
& = & 2 \textstyle{(2+\frac{1}{\eps'})}  (2l-1)   \cdot \left (  \dfrac {m-\lceil \frac{1+\eps'}{1+2\eps'}m  \rceil }{ 2l-1 } -1  \right)     \notag  \\
&  \geq & 2 \textstyle{(2+\frac{1}{\eps'})}  (2l-1) \cdot \left (  \dfrac {  \frac{\eps'}{1+2\eps'}m  -1 }{ 2l-1 } -1  \right)    \notag  \\ 
& = & 2 \textstyle{(2+\frac{1}{\eps'})} (2l-1)  \cdot \left ( \frac { \eps'm  - (1+2\eps')2l }{(1+2\eps')(2l-1)} \right) \notag \\
&\geq & m + m - (2/\eps')(1+2\eps') 2l \notag \\
& \geq & m, \notag
\end{eqnarray*}
where the last line follows because of $m \geq 2l/(\eps')^2$ and $2l/(\eps')^2 \geq (2/\eps')(1+ 2\eps') 2l$, for
any $\eps'\leq 1/8$. \hspace*{\fill}{$\Box$}
\end{proof}

The next lemma establishes the existence of valid target configurations.
\begin{lemma}\label{lem:config}
For any $\sigma$, there exists a valid target configuration $c\in C$ if $m \geq 2l/(\eps')^2$.  
\end{lemma}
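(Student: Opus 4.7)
The plan is to exhibit one explicit $u^\ast \in U$ and argue that the configuration $c(u^\ast) \in C$ produced by the construction is valid for the given $\sigma$. Writing $N = \lfloor (m-\mu)/(2l-1) \rfloor$, I would set
\[
u_i^\ast = \lfloor n_i(\sigma)/(2N) \rfloor \text{ for } 1 \le i \le l, \qquad u_i^\ast = \lfloor n_i(\sigma)/N \rfloor \text{ for } l+1 \le i \le 2l-1,
\]
namely the largest integers for which conditions (i) and (ii) are automatically preserved even after any truncation of $c'(u^\ast)$. To check $u^\ast \in U$, Lemma~\ref{lem:jobs} bounds $n_i(\sigma) \le 2m$ for $i \le l$ and $n_i(\sigma) \le m$ for $i > l$, so $u_i^\ast \le m/N$, and Lemma~\ref{lem:kappa} (applicable since $m \ge 2l/(\eps')^2$) then gives $u_i^\ast \le \kappa$.

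With $u^\ast$ fixed, set $\mu^\ast = \sum_i u_i^\ast N$, $\alpha = \sum_{i \le l} n_i(\sigma)$, and $\beta = \sum_{i > l} n_i(\sigma)$, and split the verification according to whether $\mu^\ast \le \mu$. In the non-truncation case $\mu^\ast \le \mu$, padding with zeros gives $m_i = u_i^\ast N$ exactly, so (i) and (ii) are immediate from the definition of $u_i^\ast$. For (iii), applying $\lfloor x \rfloor \ge x - 1$ coordinate-wise yields
\[
\mu_1 + \mu_2 = \mu^\ast \ge \alpha/2 + \beta - (2l-1) N \ge \alpha/2 + \beta - (m-\mu);
\]
since $\mu_1 + \mu_2$ is an integer this bound strengthens to $\lceil \alpha/2 \rceil + \beta - (m-\mu)$, which (using $\lceil (\alpha - 2\mu_1)/2 \rceil = \lceil \alpha/2 \rceil - \mu_1$) is exactly what (iii) demands.

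In the truncation case $\mu^\ast > \mu$, the vector $c(u^\ast)$ consists of the first $\mu$ entries of $c'(u^\ast)$, each of which is a nonzero class label in $\{1, \ldots, 2l-1\}$; hence $\mu_1 + \mu_2 = \mu$. Since truncation only decreases class counts, $m_i \le u_i^\ast N$ still gives (i) and (ii), and (iii) collapses to $\lceil \alpha/2 \rceil + \beta \le m$, which is exactly Lemma~\ref{lem:jobs}. The main obstacle is unifying the two regimes under a single choice of $u^\ast$: a priori it is unclear that truncation preserves (iii), and what saves the argument is the observation that truncation always removes nonzero labels from the tail of $c'(u^\ast)$, so the full budget of $\mu$ configured machines is consumed. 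The arithmetic is tight in precisely two places: the estimate $(2l-1) N \le m-\mu$ in the non-truncation case, and the bound $u_i^\ast \le \kappa$, where the choice of $\kappa$ and the hypothesis $m \ge 2l/(\eps')^2$ are consumed through Lemma~\ref{lem:kappa}.
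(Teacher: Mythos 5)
Your proof is correct and follows essentially the same route as the paper: you pick the same vector $u^\ast$ (with $N = m_0$), verify $u^\ast \in U$ via Lemmas~\ref{lem:jobs} and~\ref{lem:kappa}, and split on whether truncation occurs, with the truncation case reducing (iii) to Lemma~\ref{lem:jobs} via $\mu_1 + \mu_2 = \mu$. Your non-truncation arithmetic is a mild streamlining (bounding $\mu^\ast$ once and then rounding up, rather than bounding the class-$\{1,\dots,l\}$ and class-$\{l+1,\dots,2l-1\}$ contributions separately), but the argument is the same.
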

\begin{proof}
In this proof let $m_0 = \lfloor (m-\mu)/(2l-1)\rfloor$. Given $\sigma$, we first construct a vector 
$u\in U$. Lemma~\ref{lem:jobs} implies that for any job class $i$, $1\leq i \leq l$, there holds
$\lceil n_i(\sigma)/2\rceil \leq m$. For any job class $i$, $l+1\leq i \leq 2l-1$, there holds $n_i(\sigma) \leq m$.
By Lemma~\ref{lem:kappa}, $\kappa m_0\geq m$, which is equivalent to $m/m_0 \leq \kappa$. For any $i$
with $1\leq i \leq l$, set $u_i = \lfloor n_i(\sigma)/(2m_0)\rfloor$. For any $i$ with 
$l+1\leq i \leq 2l-1$, set $u_i = \lfloor n_i(\sigma)/m_0\rfloor$. Then $u_i \in \{0,\ldots, \kappa\}$, for
$i=1,\ldots, 2l-1$, and the resulting vector $u=(u_1, \ldots. u_{2l-1})$ is element of $U$. We next show
that the vector $c(u)$ constructed by ${\cal A}_2(\eps)$ is a valid target configuration.

When ${\cal A}_2(\eps)$ constructs $c(u)$, it first builds a vector $c'(u)= (c'_1,\ldots, c'_{\mu'})$ of
length $\mu' = \sum_{i=1}^{2l-1} u_im_0$ containing exactly $u_im_0$ entries with $c'_j = i$, for
$i=1,\ldots, 2l-1$. If $\mu'\geq \mu$, then $c(u)$ contains the first $\mu$ entries of $c'(u)$. If 
$\mu'<\mu$, then $c(u)$ is obtained from $c'(u)$ by adding $\mu-\mu'$ entries of value~0. In either case 
$c(u)$ contains at most $u_im_0$ entries of values $i$, for $i=1,\ldots, 2l-1$. Hence for the target
configuration $c(u)$, there holds $m_i\leq u_im_0$, for $i=1,\ldots, 2l-1$, where $m_i$ is again the total
number of class-$i$ machines in $c(u)$. 

If $i\in\{1,\ldots, l\}$, then $m_i \leq \lfloor n_i(\sigma)/(2m_0)\rfloor m_0 \leq n_i(\sigma)/2$,
which is equivalent to $2m_i \leq n_i(\sigma)$. Similarly, if $i\in \{l+1, \ldots, 2l-1\}$, then 
$m_i \leq \lfloor n_i(\sigma)/m_0\rfloor m_0 \leq n_i(\sigma)$. Therefore, conditions~(i) and (ii) 
defining valid target configurations are satisfied and we are left to verify
condition~(iii).

First assume $\mu'\geq \mu$. In this case the vector $c(u)$ contains no entries of value~0 and hence 
$\mu = \mu_1+\mu_2$. Recall that $\mu_1 = \sum_{i=1}^l m_i$ is the total number of class-$i$ machines 
with $i\in \{1,\ldots, l\}$ specified in $c(u)$. Similarly, $\mu_2 = \sum_{i=l+1}^{2l-1} m_i$ is
the total number of class-$i$ machines with $i\in \{l+1,\ldots, 2l-1\}$. By Lemma~\ref{lem:jobs}, 
$\lceil \sum_{i=1}^l n_i(\sigma) / 2 \rceil + \sum_{i=l+1}^{2l-1} n_i(\sigma) \leq m$. Subtracting
the equation $\mu_1+\mu_2 = \mu$, we obtain 
$$\textstyle \lceil \sum_{i=1}^l n_i(\sigma) / 2 \rceil -\mu_1 + \sum_{i=l+1}^{2l-1} n_i(\sigma) -\mu_2 
\leq m-\mu.$$
There holds $\lceil \sum_{i=1}^l n_i(\sigma) / 2 \rceil -\mu_1 = \lceil (\sum_{i=1}^l n_i(\sigma) -2\mu_1)/ 2 \rceil$
because $\mu_1$ is an integer. Hence condition~(iii) defining valid target configurations is 
satisfied.

It remains to study the case $\mu' < \mu$. For any $i$ with $i\in \{l+1,\ldots, 2l-1\}$, there holds
$u_i = \lfloor n_i(\sigma)/m_0 \rfloor$ and hence $u_i > n_i(\sigma)/m_0 -1$, which is equivalent to
$n_i(\sigma) < (u_i+1)m_0$. Hence
$$\textstyle \sum_{i=l+1}^{2l-1} n_i(\sigma) < \sum_{i=l+1}^{2l-1} (u_i+1)m_0 = \sum_{i=l+1}^{2l-1} u_im_0 + (l-1)m_0.$$
The sum $\sum_{i=l+1}^{2l-1} u_im_0 = \sum_{i=l+1}^{2l-1} u_i \lfloor (m-\mu)/(2l-1)\rfloor$ is the total
number of entries $c'_j$ with $c'_j\in \{l+1,\ldots, 2l-1\}$ in $c'(u)$. Since $\mu' <\mu$, none of these
entries is deleted when $c(u)$ is derived from $c'(u)$. Hence $\sum_{i=l+1}^{2l-1} u_im_0 =\mu_2$ 
is the total number of class-$i$ machines with $i\in \{l+1,\ldots, 2l-1\}$ specified in $c(u)$. We conclude
\begin{equation}\label{eq:n1}
\textstyle \sum_{i=l+1}^{2l-1} n_i(\sigma) \leq \mu_2 + (l-1)m_0.
\end{equation}

For any $i$ with $i\in \{1,\ldots, l\}$, there holds $u_i = \lfloor n_i(\sigma)/(2m_0) \rfloor$ and hence
$u_i > n_i(\sigma)/(2m_0) -1$. This implies $n_i(\sigma)/2 < (u_i+1)m_0$. Since $(u_i+1)m_0$ is an integer we
obtain $n_i(\sigma)/2 \leq  (u_i+1)m_0 -1$. Thus 
\begin{equation}\label{eq:n2}
\textstyle \lceil \sum_{i=1}^l n_i(\sigma)/2\rceil \leq \sum_{i=1}^l n_i(\sigma)/2 +1 \leq 
\sum_{i=1}^l (u_i+1)m_0 = \mu_1 +lm_0.
\end{equation}
Again $\sum_{i=1}^l u_im_0 = \mu_1$ because $c'(u)$ contains exactly  $\sum_{i=1}^l u_im_0$ entries
$c'_j$ with $c'_j\in \{1,\ldots, l\}$ and all of these entries are contained in $c(u)$ representing class-$i$ 
machines for $i\in \{1,\ldots, l\}$. 
Inequalities~(\ref{eq:n1}) and (\ref{eq:n2}) together with the identity $m_0 = \lfloor (m-\mu)/(2l-1)\rfloor$
imply
$$\textstyle \lceil \sum_{i=1}^l n_i(\sigma)/2\rceil -\mu_1 + \sum_{i=l+1}^{2l-1} n_i(\sigma) - \mu_2 \leq
(2l-1) \lfloor (m-\mu)/(2l-1)\rfloor \leq m-\mu.$$
Since again $\lceil \sum_{i=1}^l n_i(\sigma)/2\rceil -\mu_1 = \lceil (\sum_{i=1}^l n_i(\sigma)-2\mu_1)/2\rceil$,
condition~(iii) defining valid target configurations holds. \hspace*{\fill}{$\Box$}
\end{proof}

We next analyze the scheduling steps of ${\cal A}_2(\eps)$. 
\begin{lemma}\label{lem:sched1}
Let $A_c$ be any algorithm of ${\cal A}_2(\eps)$ processing a job sequence $\sigma$. At any time there
exists at most one machine $M_j\in {\cal M}_c$ with $\ell_s(j) >0$ and $\ell^-(j) + \ell_s(j) < 1+\eps'$
in the schedule maintained by $A_c$.
\end{lemma}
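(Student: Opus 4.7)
The plan is proof by contradiction. Suppose that at some time $T$ there are two distinct machines $M_{j_1},M_{j_2}\in{\cal M}_c$ both satisfying $\ell_s(j)>0$ and $\ell^-(j)+\ell_s(j)<1+\eps'$, and without loss of generality let $M_{j_1}$ receive its first small job strictly before $M_{j_2}$ does. Let $t^*$ be the arrival time of $J_{t^*}$, the first small job placed on $M_{j_2}$; at $t^*$ we have $\ell_s(j_1)|_{t^*}>0$ while $\ell_s(j_2)|_{t^*}=0$.

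Since the algorithm placed $J_{t^*}$ on a machine with $\ell_s=0$, the first branch of the small-job rule must have failed at every machine with positive small-job load, and in particular at $M_{j_1}$. Hence $\ell^+(j_1)+\ell_s(j_1)|_{t^*}+p_{t^*}>4/3+\eps$. Using $p_{t^*}\leq 1/3+2\eps'$ (as $J_{t^*}$ is small) and $\eps=8\eps'$, this simplifies to $\ell^+(j_1)+\ell_s(j_1)|_{t^*}>1+6\eps'$, and monotonicity of $\ell_s$ carries the inequality to time $T$. Combining with the hypothesis $\ell^-(j_1)+\ell_s(j_1)<1+\eps'$ yields $\ell^+(j_1)-\ell^-(j_1)>5\eps'$. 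This already disposes of the cases $c_{j_1}=0$ (where $\ell^+=\ell^-=0$) and $c_{j_1}=l$ (where $\ell^-(j_1)=2a_l=1+2\eps'>1+\eps'$ by itself contradicts $M_{j_1}$ belonging to the bad set).

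For the remaining classes $c_{j_1}\in\{1,\dots,l-1\}\cup\{l+1,\dots,2l-1\}$, the plan is to sharpen the bound on $p_{t^*}$ using the symmetric hypothesis on $M_{j_2}$. The case $c_{j_2}=l$ is excluded just as above, and $c_{j_2}=0$ forces $c_{j_1}=0$ via the $\ell^-$-tiebreaker when $M_{j_1}$ was first activated (since $M_{j_2}$ with $\ell^+(j_2)=0$ would have been a feasible candidate of minimum $\ell^-$); hence $\ell^-(j_2)\geq 2a_1=2/3+4\eps'$. Because $\ell_s(j_2)|_T\geq p_{t^*}$ and $\ell^-(j_2)+\ell_s(j_2)<1+\eps'$, we obtain $p_{t^*}<1/3-3\eps'$, and re-inserting sharpens the earlier bound to $\ell^+(j_1)-\ell^-(j_1)>10\eps'$. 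The final contradiction follows from the explicit expression $\ell^+(j_1)-\ell^-(j_1)=2(b_{i'}-a_{i'})$ for the effective class $i'$ of $M_{j_1}$, together with the calibrated choice $\lambda=\lceil\log(3/8+1/(48\eps'))\rceil$ that controls the interval widths $b_{i'}-a_{i'}$. I expect the main obstacle to be this last step: verifying by a careful interval-specific calculation (and, if needed, iterating the sharpening of $p_{t^*}$ by again invoking the placement rules) that no admissible effective class $i'\in\{1,\dots,l-1\}$ can sustain $2(b_{i'}-a_{i'})>10\eps'$ consistently with the assignment constraints that tie $M_{j_1}$ and $M_{j_2}$ together.
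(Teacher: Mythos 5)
Your setup is sound and the first half of your argument matches the paper's: you correctly extract $\ell^+(j_1)+\ell_s(j_1)>4/3+\eps-p_{t^*}$ from the failure of the first branch of the small-job rule when $M_{j_2}$ is activated, and you correctly identify the $\ell^-$-tiebreaker as the other relevant mechanism. But the endgame fails, and not merely as a computation left to check: the inequality you need, namely that no effective class $i'\in\{1,\ldots,l-1\}$ satisfies $2(b_{i'}-a_{i'})>10\eps'$, is simply false. The interval widths grow geometrically; for $i'=l-1=\lambda+1$ one has $b_{l-1}-a_{l-1}=\tfrac{1}{12}+\tfrac32\eps'$, so $2(b_{l-1}-a_{l-1})=\tfrac16+3\eps'$, which exceeds $10\eps'$ for every $\eps'<1/42$ (i.e.\ for all small $\eps$), and the same happens for several neighbouring classes. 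Iterating the sharpening of $p_{t^*}$ cannot rescue this, because your bound on $p_{t^*}$ only ever feeds off the crude estimate $\ell^-(j_2)\geq 2a_1$, which does not improve.

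The missing idea is to use the tiebreaker \emph{quantitatively}, not just to exclude $c_{j_2}=0$: at the moment $M_{j_1}$ received its first small job, $M_{j_2}$ (not class-$l$, with $\ell_s(j_2)=0$) was a feasible candidate, so the minimum-$\ell^-$ rule gives $\ell^-(j_2)\geq\ell^-(j_1)=2a_{i'}$. Feeding this into your own chain yields $p_{t^*}<1+\eps'-2a_{i'}$ and hence $2b_{i'}-2a_{i'}=\ell^+(j_1)-\ell^-(j_1)>4a_{i'}-\tfrac23+6\eps'-2a_{i'}$, i.e.\ one needs $2a_{i'}-b_{i'}\geq\tfrac13-3\eps'$ for the contradiction. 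This is exactly where the calibration of the job classes enters: by construction $b_i-a_i=a_i-(\tfrac13-2\eps')$, so $2a_i-b_i=\tfrac13-2\eps'$ for every $i$ — wide intervals are compensated by large $a_i$, and the interval width cancels identically. This cancellation (the paper's computation $\ell^-(j')+p_t\geq \tfrac13+7\eps'+2\ell^-(j)-\ell^+(j)=1+3\eps'$) is the heart of the lemma, and it is absent from your argument because you decoupled the gap $\ell^+(j_1)-\ell^-(j_1)$ from the location $\ell^-(j_2)$.
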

\begin{proof}
Consider any point in time while $A_c$ sequences $\sigma$. Suppose that there exists a machine 
$M_j\in {\cal M}_c$ with $\ell_s(j) >0$ and $\ell^-(j) + \ell_s(j) < 1+\eps'$. We show that if a small
job $J_t$ arrives and $A_c$ assigns it to a machine $M_{j'} \in {\cal M}_c$ with $\ell_s(j') =0$,
then $\ell^-(j')+p_t >1+\eps'$ so that no new machine with the property specified in the lemma is
generated. A first observation is that $M_j$ is not a class-$l$ machine because in this case $\ell^-(j)$ 
would be $2a_l = 2b_{l-1} = 1 +2\eps'$. Also, if $M_{j'}$ is a class-$l$ machine, there is 
nothing to show because, again, in this case $\ell^-(j') \geq  1+2\eps'$.

So assume that $A_c$ assigns $J_t$ to a machine $M_{j'} \in {\cal M}_c$, which is not a class-$l$ machine,
and $\ell_s(j') =0$ prior to the assignment. We first show that $\ell^-(j') \geq \ell^-(j)$. Consider
the scheduling step in which $A_c$ assigned the first small job $J_{t'}$ to $M_j$. Since $M_j$ is not
a class-$l$ machine $\ell^+(j) = 2b_i$, for some $i\in \{1,\ldots, l-1\}$ and the assignment of
$J_{t'}$ to $M_j$ led to a load of at most $\ell^+(j) + p_{t'} \leq 1 + 2\eps' + 1/3+2\eps' 
= 4/3+4\eps' < 4/3 +\eps$. Since $M_{j'}$ is not a class-$l$ machine either, $J_{t'}$ could have also
been assigned to $M_{j'}$ incurring a resulting load of at most $\ell^+(j')+p_{t'} < 4/3 +\eps$ on this
machine. Note that when an algorithm $A_c$ cannot assign a small job to a machine $M_j\in {\cal M}_c$ 
with $\ell_s(j) >0$ and instead has to resort to machines $M_k\in {\cal M}_c$  with $\ell_s(k) =0$, it
chooses a machine having the smallest $\ell^-(k)$ value. We conclude $\ell^-(j) \leq \ell^-(j')$.

Next consider the assignment of $J_t$. Algorithm $A_c$ would prefer to place $J_t$ on $M_j$ as it already
contains small jobs. Since this is impossible, there holds $\ell^+(j) + \ell_s(j) +p_t > 4/3+\eps$ and thus
$p_t > 4/3+ 8\eps' - \ell^+(j) - \ell_s(j)$. Since by assumption $\ell^-(j) + \ell_s(j) <1+\eps'$ it follows
$p_t > 1/3 + 7\eps' - \ell^+(j) + \ell^-(j)$. Suppose that $\ell^+(j) = 2b_i$, for some 
$i\in \{1,\ldots, l-1\}$. Then $\ell^-(j) = 2a_i$. Since $\ell^-(j') \geq \ell^-(j)$ we obtain
\begin{eqnarray*}
\ell^-(j') +p_t &\geq& 1/3 + 7\eps' + \ell^-(j) - \ell^+(j) + \ell^-(j)\\
& \geq & {\textstyle 1/3 + 7\eps' + 2({1\over 12} + {3\over 2}\eps')({1\over 2^{\lambda+1-i}} - {1\over 2^{\lambda-i}})}\\
& & {\textstyle +2/3 - 4\eps' +  2({1\over 12} + {3\over 2}\eps'){1\over 2^{\lambda+1-i}}}\\
& = & 1 + 3\eps' > 1+\eps',
\end{eqnarray*}
as desired. \hspace*{\fill}{$\Box$}
\end{proof}

The following lemmas focus on algorithms $A_c$ such that $c$ is a valid target configuration for $\sigma$.
\begin{lemma}\label{lem:sched2}
Let $\sigma$ be any job sequence and $A_c$ be an algorithm such that $c$ is a valid target configuration for
$\sigma$. Let $m \geq 2l/(\eps')^2$. Consider any point in time during the scheduling process. If the
schedule of $A_c$ contains at most one machine $M_j\in {\cal M}_c$ with $\ell^-(j)+\ell_s(j) < 1+\eps'$, then
no further small job can arrive. 
\end{lemma}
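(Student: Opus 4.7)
The plan is to argue by a volume contradiction. Suppose for contradiction that a further small job $J_{t^*}$ with $p_{t^*}>0$ arrives in $\sigma$ after the time in question. Since $\opt(\sigma)=1$, the total processing volume of $\sigma$ is at most $m$. I would show that the hypothesis forces this volume to strictly exceed $m$.

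First I would bound the total large-job volume of $\sigma$ from below by $\sum_{M_j\in{\cal M}_c}\ell^-(j)$. Because $c$ is a valid target configuration, condition (i) of validity guarantees that $\sigma$ contains at least $2m_i$ class-$i$ jobs for each $i\in\{1,\dots,l\}$, and condition (ii) gives at least $m_i$ class-$i$ jobs for $i\in\{l+1,\dots,2l-1\}$. Every class-$i$ job has processing time at least $a_i$ (resp.\ $2a_{i-l}$), so summing yields a total large-job volume of at least $\sum_{i=1}^{l}2m_i a_i+\sum_{i=l+1}^{2l-1}m_i\cdot 2a_{i-l}=\sum_{M_j\in{\cal M}_c}\ell^-(j)$, by the very definition of $\ell^-(j)$ on class-$i$ machines of $c$.

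Next, since ${\cal A}_2(\eps)$ never places small jobs on ${\cal M}_r$, the total small-job volume that has arrived by now equals $\sum_{M_j\in{\cal M}_c}\ell_s(j)$. Assuming the hypothesized further small job $J_{t^*}$ arrives, the total processing volume of $\sigma$ is at least
\[
\sum_{M_j\in{\cal M}_c}\bigl(\ell^-(j)+\ell_s(j)\bigr)+p_{t^*}.
\]
By the hypothesis of the lemma, at most one machine $M_j\in{\cal M}_c$ satisfies $\ell^-(j)+\ell_s(j)<1+\eps'$, so the sum above is at least $(\mu-1)(1+\eps')+p_{t^*}$.

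Finally, I would combine this with the size of $\mu$. From $\mu\geq\frac{1+\eps'}{1+2\eps'}m$, one obtains
\[
(\mu-1)(1+\eps')\;\geq\;\frac{(1+\eps')^2}{1+2\eps'}\,m-(1+\eps')\;=\;m+\frac{(\eps')^2}{1+2\eps'}\,m-(1+\eps').
\]
The assumption $m\geq 2l/(\eps')^2$ yields $\frac{(\eps')^2}{1+2\eps'}m\geq\frac{2l}{1+2\eps'}$, and since $l\geq 2$ and $\eps'\leq 1/8$ this quantity strictly exceeds $1+\eps'$. Hence $(\mu-1)(1+\eps')>m$, and together with $p_{t^*}>0$ the total volume of $\sigma$ strictly exceeds $m\cdot\opt(\sigma)=m$, the desired contradiction. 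The structure of the argument is a clean volume count; the main technical point to check carefully is the final numerical inequality, where the interplay between the definition of $\mu$, the bound $m\geq 2l/(\eps')^2$, and the assumed range $\eps'\leq 1/8$ must all be used simultaneously.
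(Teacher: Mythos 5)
Your proof is correct and takes essentially the same approach as the paper: a volume contradiction bounding the total processing time of $\sigma$ from below by $\sum_{M_j\in{\cal M}_c}(\ell^-(j)+\ell_s(j))\geq(\mu-1)(1+\eps')$ and then showing, via $\mu\geq\frac{1+\eps'}{1+2\eps'}m$ and $m\geq 2l/(\eps')^2$, that this exceeds $m\cdot\opt(\sigma)$. The only cosmetic difference is in the final numerical verification, where the paper drops to $m\geq 2/(\eps')^2$ and checks $2\geq(1+\eps')(1+2\eps')$ directly, while you keep the factor $2l$ and invoke $l\geq 2$; both are valid.
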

\begin{proof}
Since $c$ is a valid target configuration for $\sigma$, the job sequence contains as many class-$i$ jobs,
for any $i\in \{1,\ldots, l\}$, as indicated by $c$. Hence the total processing time of large jobs in 
$\sigma$ is lower bounded by $\sum_{j=1}^{\mu} \ell^-(j)$. Hence the total processing time of jobs in
$\sigma$ is at least $\sum_{j=1}^{\mu} (\ell^-(j) + \ell_s(j))$, where the machine loads due to small
jobs may be considered at an arbitrary point in time. Hence if there exists a time such that 
$\ell_s(j) + \ell^-(j) < 1+\eps'$ for at most one $M_j \in {\cal M}_c$, we obtain
\begin{eqnarray*}
{\textstyle \sum_{j=1}^{\mu} (\ell^-(j) + \ell_s(j))} & \geq & 
\textstyle{(1+\eps')(\mu-1) \geq (1+\eps') ({1+\eps' \over 1+2\eps'} m -1)}\\
& = & {\textstyle m + {(\eps')^2\over1+2\eps'} m - (1+\eps') \geq m.}
\end{eqnarray*}
The last inequality holds because $m \geq 2l/(\eps')^2 \geq 2/(\eps')^2 \geq (1+\eps')(2\eps'+1)/(\eps')^2$,
for any $\eps' \leq 1/8$. Hence no further small job can arrive. \hspace*{\fill}{$\Box$}
\end{proof}

\begin{lemma}\label{lem:sched3}
Let $\sigma$ be any job sequence and $A_c$ be an algorithm such that $c$ is a valid target configuration for
$\sigma$. Let $m \geq 2l/(\eps')^2$. Then in the final schedule constructed by $A_c$, each machine in 
${\cal M}_c$ has a load of at most $4/3 +\eps$. 
\end{lemma}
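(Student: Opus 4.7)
The plan is to reduce Lemma~\ref{lem:sched3} to the invariant that $\ell^+(j) + \ell_s(j) \leq 4/3 + \eps$ for every $M_j \in {\cal M}_c$ at every moment of the execution of $A_c$. This invariant implies the lemma at once: admissibility prevents a class-$i$ machine from receiving more than two class-$i$ jobs when $i \leq l$ (or more than one such job when $i > l$), and each such job has processing time at most $b_i$, so the actual large-job load on $M_j$ never exceeds $\ell^+(j)$. Extra large jobs beyond those matching $c$ are diverted by $A_c$ to ${\cal M}_r$ and hence do not disturb ${\cal M}_c$.

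First I would prove the invariant by induction on the number of scheduled jobs. It holds at time~$0$ because $\ell_s(j) = 0$ and $\ell^+(j) \leq 2b_l = 4/3 + 8\eps' = 4/3 + \eps$. Large-job arrivals preserve it trivially, since $\ell^+(j)$ is a static quantity determined by $c$ and $\ell_s$ is unchanged. When a small job $J_t$ is placed via either of the two feasibility branches of the small-job rule, the explicit threshold tested by $A_c$ guarantees the post-placement invariant. The only threat is the fall-through rule, which places $J_t$ on an arbitrary machine of ${\cal M}_c$; the heart of the proof is therefore to show this rule is never invoked.

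To rule out the fall-through I would argue by contradiction. If some small job $J_t$ triggers it, then the contrapositive of Lemma~\ref{lem:sched2} supplies at least two machines of ${\cal M}_c$ with $\ell^-(j) + \ell_s(j) < 1 + \eps'$ just before the arrival of $J_t$, and Lemma~\ref{lem:sched1} forbids more than one of them from carrying a small job. Hence some $M_{j^*}$ satisfies $\ell_s(j^*) = 0$ and $\ell^-(j^*) < 1 + \eps'$. Infeasibility on $M_{j^*}$ would force $\ell^+(j^*) + p_t > 4/3 + \eps$. A short case analysis on $c_{j^*}$ then closes the argument. If $c_{j^*} = 0$, then $\ell^+(j^*) = 0$ forces $p_t > 4/3 + \eps$, contradicting $p_t \leq 1/3 + 2\eps'$. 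If $c_{j^*} = l$, then $\ell^-(j^*) = 2a_l = 2b_{l-1} = 1 + 2\eps' \geq 1 + \eps'$, contradicting the choice of $M_{j^*}$. In every remaining case $\ell^+(j^*) = 2b_i$ with $i \leq l-1$, so $\ell^+(j^*) + p_t \leq 2b_{l-1} + (1/3 + 2\eps') = 4/3 + 4\eps' < 4/3 + \eps$, again contradicting infeasibility.

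The main obstacle I expect is precisely this case split, which relies on the numerical identities built into the class boundaries—specifically $2a_l = 2b_{l-1} = 1 + 2\eps'$ together with $2b_l = 4/3 + 8\eps'$—to single out class $l$ as the unique class whose $\ell^+$ can reach $4/3 + \eps$, and to observe that precisely this class is already precluded by the condition $\ell^-(j^*) < 1 + \eps'$. Once the fall-through is excluded, the inductive step and the bound on the large-job contribution are routine.
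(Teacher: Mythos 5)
Your proposal is correct and follows essentially the same route as the paper: bound the large-job load on each machine of ${\cal M}_c$ by $\ell^+(j)\leq 2b_l=4/3+\eps$ via admissibility, then rule out the small-job fall-through by combining Lemma~\ref{lem:sched1} with Lemma~\ref{lem:sched2} and the boundary identities $2a_l=2b_{l-1}=1+2\eps'$. The only cosmetic difference is that you invoke the contrapositive of Lemma~\ref{lem:sched2} to extract a machine $M_{j^*}$ with $\ell_s(j^*)=0$ and $\ell^-(j^*)<1+\eps'$ and then do a case split on its class, whereas the paper argues that every $\ell_s=0$ machine blocking the assignment must be a class-$l$ machine and then applies Lemma~\ref{lem:sched2} directly; the content is identical.
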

\begin{proof}
We consider the scheduling steps in which $A_c$ assigns a job $J_t$ to a machine in ${\cal M}_c$. First suppose 
that $J_t$ is large. Let $J_t$ be a class-$i$ job, where $1\leq i \leq 2l-1$. If $J_t$ is assigned to an 
$M_j \in {\cal M}_c$, then $M_j$ must be an admissible class-$i$ machine, i.\,e.\ prior to the assignment of $J_t$ 
it contains fewer class-$i$ jobs as specified by the target configuration. This implies that for any machine
$M_j \in {\cal M}_c$, its load due to large jobs is always at most $\ell^+(j)$. The latter value is upper
bounded by $2b_l \leq 2(2/3+4\eps') = 4/3 + 8\eps' = 4/3+\eps$. Hence, in order to establish the lemma it
suffices to show that whenever a small job is assigned to a machine $M_j \in {\cal M}_c$, the resulting
load $\ell^+(j) + \ell_s(j)$ on $M_j$ is at most $4/3+\eps$.

Suppose on the contrary that a small job $J_t$ arrives and $A_c$ schedules it on a machine in ${\cal M}_c$
such that the resulting load is greater than $4/3+\eps$. Algorithm $A_c$ first tries to place $J_t$ on a 
machine $M_j \in {\cal M}_c$ with $\ell_s(j) >0$, which has already received small jobs. By Lemma~\ref{lem:sched1},
among these machines there exists at most one having the property that $\ell^-(j) + \ell_s(j) < 1+\eps'$. 
Since an assignment to those machines is impossible without exceeding a load of $4/3+\eps$, $A_c$
tries to place $J_t$ on a machine $M_j\in {\cal M}_c$ with $\ell_s(j) =0$. Since this is also impossible
without exceeding a load of $4/3+\eps$, any $M_j\in {\cal M}_c$ with $\ell_s(j)=0$ must be a class-$l$
machine. This holds true because for any class-$i$ machine with $i\neq l$, there holds $\ell^+(j) \leq 2b_{l-1}
\leq 1+2\eps'$ and an assignment of a small job would result in a total load of at most 
$1+2\eps' + 1/3 + 2\eps' < 4/3+\eps$. Observe that any class-$l$ machine has a targeted minimal load of
$2a_l = 2b_{l-1} \geq 1+2\eps' > 1+\eps'$. 

We conclude that immediately before the assignment of $J_t$ the schedule of $A_c$ contains at most one
machine $M_j \in {\cal M}_c$ with $\ell^-(j) + \ell_s(j) < 1+\eps'$. Lemma~\ref{lem:sched2} implies 
that the incoming job $J_t$ cannot be small, and we obtain a contradiction. \hspace*{\fill}{$\Box$}
\end{proof}

\begin{lemma}\label{lem:sched4}
Let $\sigma$ be any job sequence and $A_c$ be an algorithm such that $c$ is a valid target configuration for
$\sigma$. Then in the final schedule constructed by $A_c$, each machine in ${\cal M}_r$ has a load of at
most $4/3 +\eps$. 
\end{lemma}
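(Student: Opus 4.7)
The plan is to argue by contradiction: suppose the final schedule of $A_c$ contains a reserve machine with load greater than $4/3+\eps$, and let $J_t$ be the first job whose placement on ${\cal M}_r$ created an overflow. Because $A_c$ falls back to placing $J_t$ on an arbitrary machine of ${\cal M}_r$ only when Best-Fit fails, just before $J_t$ is assigned every $M_j\in{\cal M}_r$ must satisfy $\ell(j)+p_t>4/3+\eps$. Since $p_t\leq 1<4/3+\eps$, no reserve machine can be empty at that moment, so all $m-\mu$ reserve machines are already in use.

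The technical heart of the argument is a bin-count invariant for the Best-Fit policy on ${\cal M}_r$: at every stage of the execution, the number of non-empty reserve machines is at most $H'+\lceil L'/2\rceil$, where $H'$ and $L'$ denote the numbers of large-large (class $>l$) and small-large (class $\leq l$) jobs placed on ${\cal M}_r$ so far. I would establish this by tracking the quantity $\gamma_1$, defined as the number of reserve machines holding exactly one small-large job and no large-large job, and showing it never exceeds $1$. The key observation is that two small-large jobs together have size at most $2(2/3+4\eps')=4/3+\eps$, so a $\gamma_1$-machine always accommodates a newly arriving small-large job; hence Best-Fit, which chooses a fitting machine of largest load, never opens a fresh bin while $\gamma_1\geq 1$, and a routine induction gives $\gamma_1\leq 1$ throughout. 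Combining this with $L'\geq \gamma_1+2\gamma_2+3\gamma_3+\cdots$, where $\gamma_k$ counts reserve machines with exactly $k$ small-large jobs and no large-large job, and using integrality of $\sum_k \gamma_k$, yields $\sum_k\gamma_k\leq\lceil L'/2\rceil$ and hence the stated bound on the number of used reserve machines.

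Applying the invariant at the moment $J_t$ fails gives $m-\mu\leq H'+\lceil L'/2\rceil$. If $J_t$ is large-large, then $H'\leq H-1$ and $L'\leq L$, so $H+\lceil L/2\rceil\geq m-\mu+1$, contradicting condition~(iii) of valid target configurations. If $J_t$ is small-large, then $p_t\leq 2/3+4\eps'$, so the failure condition forces $\ell(j)>4/3+\eps-p_t\geq 2/3+4\eps'$ on every reserve machine, which rules out $\gamma_1$-machines entirely (a lone small-large job has load at most $2/3+4\eps'$). This sharpens the invariant to $m-\mu\leq H'+L'/2\leq H+(L-1)/2$, i.e.\ $L+2H\geq 2(m-\mu)+1$, again contradicting condition~(iii) via $2\lceil L/2\rceil+2H\leq 2(m-\mu)$.

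The main obstacle is proving the invariant $\gamma_1\leq 1$ cleanly, because Best-Fit may place a large-large job on top of a singleton small-large machine when their total size fits, and the subsequent interactions between large-large and small-large jobs create several subcases. What rescues the argument is the crisp fact that a bin containing a lone small-large job is always roomy enough for another small-large, so Best-Fit has no reason to open a new bin while such a singleton persists; all other transitions can only leave $\gamma_1$ unchanged or decrease it.
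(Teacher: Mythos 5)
Your proof is correct and follows essentially the same route as the paper's: the crucial observation in both is that two class-$i$ jobs with $i\le l$ always fit together on a reserve machine (total load at most $2b_l\le 4/3+\eps$), so Best-Fit never opens a fresh reserve machine for such a job while a singleton persists, which yields the bound $H'+\lceil L'/2\rceil$ on the number of occupied reserve machines and is then played off against condition~(iii). The only step you leave implicit is that $L\le\sum_{i=1}^{l}n_i(\sigma)-2\mu_1$ and $H\le\sum_{i=l+1}^{2l-1}n_i(\sigma)-\mu_2$ --- needed to actually invoke condition~(iii) --- which holds because a large job is routed to ${\cal M}_r$ only after all matching machines of ${\cal M}_c$ are saturated; the paper states this explicitly.
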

\begin{proof}
Algorithm $A_c$ assigns only large jobs to machines in ${\cal M}_r$. A first observation is that whenever there
exists an $M_j \in {\cal M}_r$ that contains only one class-$i$ job with $i\in \{1,\ldots, l\}$ but no further
jobs, then an incoming class-$i'$ job with $i'\in \{1,\ldots, l\}$ will not be assigned to an empty machine.
This holds true because the two jobs can be combined, which results in a total load of at most $2b_l \leq 4/3+8\eps'
=4/3+\eps$.

The observation implies that at any time while $A_c$ processes $\sigma$, the number of machines of ${\cal M}_r$
containing at least one job is upper bounded by $\lceil n_1/2\rceil +n_2$. Here $n_1$ denotes the total
number of class-$i$ jobs with $i\in \{1,\ldots, l\}$ that have been assigned to machines of ${\cal M}_r$ so far.
Analogously, $n_2$ is the total number of class-$i$ jobs with $i\in \{l+1,\ldots, 2l-1\}$ currently residing
on machines in ${\cal M}_r$. Since $c$ is a valid target configuration for $\sigma$ conditions~(i) and (ii)
defining those configurations imply $0\leq \sum_{i=1}^l n_i(\sigma) -2\mu_1$ and 
$0\leq \sum_{i=l+1}^{2l-1} n_i(\sigma) -\mu_2$. Moreover, since $A_c$ assigns large jobs preferably to machines 
in ${\cal M}_c$, there holds $n_1 \leq  \sum_{i=1}^l n_i(\sigma) -2\mu_1$ and  
$n_2 \leq \sum_{i=l+1}^{2l-1} n_i(\sigma) -\mu_2$. By condition~(iii) defining valid target configurations,
$\lceil (\sum_{i=1}^l n_i(\sigma) -2\mu_1) / 2 \rceil + \sum_{i=l+1}^{2l-1} n_i(\sigma) - \mu_2 \leq m-\mu$.
Hence, while $n_2 < \sum_{i=l+1}^{2l-1} n_i(\sigma) -\mu_2$ there holds $\lceil n_1/2\rceil +n_2 < m-\mu$
and thus exists an empty machine ${\cal M}_r$
to which an incoming class-$i$ jobs with $i\in \{l+1,\ldots, 2l-1\}$ can be assigned. Similarly, while 
$n_1 < \sum_{i=1}^l n_i(\sigma) -2\mu_1$, there must exist an empty machine or a machine containing only
one class-$i'$ job with $i'\in \{1,\ldots, l\}$ to which in incoming class-$i$ job with $i\in \{1,\ldots, l\}$
can be assigned. In either case, the assignment generates a load of at most $4/3+\eps$ on the selected 
machine. \hspace*{\fill}{$\Box$}
\end{proof}
Theorem~\ref{th:alg2} now follows from Lemmas~\ref{lem:config}, \ref{lem:sched3} and \ref{lem:sched4}.

\section{Algorithms for \MPS}\label{sec:mps}

We derive our algorithms for \MPS. The strategies are obtained by simply combining ${\cal A^*}(\eps)$,
presented in Section~\ref{sec:redu}, with ${\cal A}_1(\eps)$ and ${\cal A}_3(\eps)$. 
In order to achieve a precision of $\eps$ in the competitive ratio, the strategies are combined with
a precision of $\eps/2$ in its parameters. 
For any $0< \eps \leq 1$, let ${\cal A}_3^*(\eps)$ be the algorithm obtained by executing
${\cal A}_3(\eps/2)$ in ${\cal A^*}(\eps/2)$. For any $0< \eps \leq 1$, 
let ${\cal A}_1^*(\eps)$ be the algorithm obtained by executing ${\cal A}_1(\eps/2)$ in ${\cal A^*}(\eps/2)$.

\begin{corollary}\label{cor:2}
${\cal A}_3^*(\eps)$ is a $(4/3+\eps)$-competitive algorithm for MPS and
uses no more than $1/\eps^{O(\log (1/\eps))}$ schedules, for any $0<\eps \leq 1$.
\end{corollary}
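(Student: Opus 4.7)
The plan is to derive the corollary as a direct composition of Theorem~\ref{th:guess1} with Corollary~\ref{cor:A3}, carefully splitting the accuracy budget so that the two successive additive losses together amount to at most $\eps$.

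First I would apply Corollary~\ref{cor:A3} with accuracy $\eps/2$ to obtain the MPS$_{\text{opt}}$-algorithm ${\cal A}_3(\eps/2)$, which is $(4/3+\eps/2)$-competitive and uses at most $1/(\eps/2)^{O(\log(2/\eps))} = 1/\eps^{O(\log(1/\eps))}$ schedules. Then I would apply Theorem~\ref{th:guess1} to this algorithm with parameter $\eps/2$: setting $\rho = 4/3+\eps/2$ in the theorem immediately yields that ${\cal A}_3^*(\eps) = {\cal A}^*(\eps/2)$ is $(\rho+\eps/2)$-competitive, and since $\rho + \eps/2 = 4/3 + \eps$, the competitive ratio follows.

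Next I would bound the number of schedules. By Theorem~\ref{th:guess1}, ${\cal A}_3^*(\eps)$ uses $h \cdot |{\cal K}|$ schedules, where $|{\cal K}|$ is the schedule count of ${\cal A}_3(\eps/2)$ and
\[
h = \left\lceil \log\!\left(1 + \tfrac{12\rho}{\eps}\right) \Big/ \log\!\left(1 + \tfrac{\eps}{6\rho}\right)\right\rceil.
\]
Since $0<\eps \leq 1$, we have $\rho = 4/3 + \eps/2 \leq 11/6$, so the numerator is $O(\log(1/\eps))$ and, using $\log(1+x) = \Theta(x)$ for small $x$, the denominator is $\Theta(\eps)$. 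Hence $h = O(\log(1/\eps)/\eps)$, which is a polynomial factor in $1/\eps$. Multiplying it by $|{\cal K}| = 1/\eps^{O(\log(1/\eps))}$ keeps the product at $1/\eps^{O(\log(1/\eps))}$, since the polynomial factor is absorbed into the super-polynomial exponent.

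The result is not really an obstacle; the only thing one needs to check is that the $h$-factor from the reduction does not blow up the schedule count. This is immediate because the competitiveness $\rho$ of ${\cal A}_3(\eps/2)$ is bounded by an absolute constant, which makes $h$ merely polynomial in $1/\eps$ and therefore negligible compared to the $1/\eps^{O(\log(1/\eps))}$ cost of the underlying MPS$_{\text{opt}}$-algorithm. Combining the three observations yields the claimed $(4/3+\eps)$-competitive ratio together with the stated bound on the number of schedules.
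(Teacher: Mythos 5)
Your proposal is correct and follows essentially the same route as the paper: compose Theorem~\ref{th:guess1} (with accuracy $\eps/2$ and $\rho = 4/3+\eps/2$) with Corollary~\ref{cor:A3} (at accuracy $\eps/2$), and observe that the resulting factor $h = \lceil \log(1+12\rho/\eps)/\log(1+\eps/(6\rho))\rceil$ is only polynomial in $1/\eps$, hence absorbed by the $1/\eps^{O(\log(1/\eps))}$ term. The only cosmetic difference is that you bound $h$ via $\log(1+x) = \Theta(x)$ to get $O(\log(1/\eps)/\eps)$, while the paper invokes $\ln(1+x) \geq x/2$ to state $h = 1/\eps^{O(1)}$; both are equivalent for the purpose of the argument.
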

\begin{proof}
Theorem~\ref{th:guess1} and Corollary~\ref{cor:A3} imply that ${\cal A}_3^*(\eps)$ is
$(4/3+\eps)$-competitive, for any $0<\eps \leq 1$, and that the total number of schedules is the 
product of $1/\eps^{O(\log (1/\eps))}$ and $\lceil \log (1+ 12\rho/\eps) / \log(1+ \eps/(6\rho))\rceil$, 
where $\rho = 4/3+\eps/2$. By the Taylor series for $\ln(1+x)$, $-1< x \leq 1$, we obtain
$\ln(1+x) \geq x/2$, for any $0<x \leq 1$. Hence the second term of the product is $1/\eps^{O(1)}$. \hspace*{\fill}{$\Box$}
\end{proof}

\begin{corollary}\label{cor:3}
${\cal A}_1^*(\eps)$ is a $(1+\eps)$-competitive algorithm for MPS and uses no more than 
$(m/\eps)^{O(\log (1/\eps) / \eps)}$ schedules, for any $0<\eps \leq 1$.
\end{corollary}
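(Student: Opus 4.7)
The plan is to mirror the argument of Corollary~\ref{cor:2}, substituting ${\cal A}_1(\eps/2)$ for ${\cal A}_3(\eps/2)$ as the inner algorithm. First I would invoke Theorem~\ref{th:guess2} with parameter $\eps/2$ to obtain that ${\cal A}_1(\eps/2)$ is a $\rho$-competitive algorithm for \MPSO\ with $\rho = 1+\eps/2$. Then I would feed this into the reduction of Theorem~\ref{th:guess1} with precision $\eps/2$, so that ${\cal A^*}(\eps/2)$ wrapping ${\cal A}_1(\eps/2)$ is $(\rho + \eps/2) = (1+\eps)$-competitive for \MPS. This is exactly ${\cal A}_1^*(\eps)$, which settles the competitive ratio claim.

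For the schedule count I would multiply the two bounds. By Theorem~\ref{th:guess2}, ${\cal A}_1(\eps/2)$ maintains at most $(\lfloor 4m/\eps\rfloor +1)^{\lceil \log(4/\eps)/\log(1+\eps/4)\rceil}$ schedules, and by Theorem~\ref{th:guess1} the wrapper multiplies this by $h = \lceil \log(1 + 12\rho/\eps)/\log(1+\eps/(6\rho))\rceil$, with $\rho = 1+\eps/2 \leq 3/2$. To simplify both factors I would use the same elementary bound exploited in Corollary~\ref{cor:2}: the Taylor expansion of $\ln(1+x)$ yields $\ln(1+x)\geq x/2$ for $0< x\leq 1$, so $\log(1+\eps/4) = \Omega(\eps)$ and $\log(1+\eps/(6\rho))=\Omega(\eps)$. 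This collapses the exponent to $O(\log(1/\eps)/\eps)$ and the wrapper factor $h$ to $O(\log(1/\eps)/\eps)$. Absorbing this polylogarithmic wrapper factor into the exponential base gives the stated bound $(m/\eps)^{O(\log(1/\eps)/\eps)}$.

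I do not expect a real obstacle here: all the difficulty is packed into Theorems~\ref{th:guess1} and \ref{th:guess2}, and what remains is bookkeeping of constants and exponents. The only mild subtlety is ensuring the two uses of ``$\eps$'' are properly split (the inner $\eps/2$ for the competitiveness of ${\cal A}_1$ and the outer $\eps/2$ for the reduction), so that the competitive guarantees add cleanly to $1+\eps$; this is handled by the definition of ${\cal A}_1^*(\eps)$ given just before the corollary.
\hspace*{\fill}{$\Box$}
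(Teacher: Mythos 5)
Your proof is correct and follows essentially the same route as the paper: invoke Theorem~\ref{th:guess2} for the inner $(1+\eps/2)$-competitive algorithm, apply the reduction of Theorem~\ref{th:guess1} with $\rho = 1+\eps/2$, multiply the two schedule-count bounds, and use the Taylor estimate $\ln(1+x)\geq x/2$ to collapse both factors into $(m/\eps)^{O(\log(1/\eps)/\eps)}$. Only the level of narration differs; the decomposition, constants, and simplifications match the paper's proof.
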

\begin{proof}
By Theorems~\ref{th:guess1} and~\ref{th:guess2} algorithm ${\cal A}_1^*(\eps)$ is $(1+\eps)$-competitive,
for any $0<\eps \leq 1$. The total number of schedules is the product of 
$(\lfloor 4m/\eps\rfloor +1)^{\lceil \log(4/\eps) / \log(1+\eps/4) \rceil }$ and
$\lceil \log (1+ 12\rho/\eps) / \log(1+ \eps/(6\rho))\rceil$, where $\rho = 1 +\eps/2$. Again, by the 
Taylor series, $\ln(1+x) \geq x/2$, for any $0<x \leq 1$. Hence both terms of the product are upper bounded
by $(m/\eps)^{O(\log (1/\eps) / \eps)}$. \hspace*{\fill}{$\Box$}
\end{proof}

\section{Lower bounds}\label{sec:lb}
We develop lower bounds that apply  to both \MPS\ and \MPSO. 
\begin{theorem}\label{th:lb1}
Let ${\cal A}$ be a deterministic online algorithm for MPS or MPS$_{\rm opt}$. If ${\cal A}$ attains a competitive 
ratio smaller than $4/3$, then it must maintain at least $\lfloor m/3\rfloor +1$ schedules.
\end{theorem}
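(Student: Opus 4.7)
The plan is an adversary lower bound argument. Assume that ${\cal A}$ has $l$ schedules and competitive ratio $\rho < 4/3$; I would aim to derive $l \geq \lfloor m/3\rfloor + 1$ by exhibiting a job sequence on which every schedule of ${\cal A}$ has makespan at least $(4/3)\opt(\sigma)$.

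I would construct an adaptive adversary that presents a two-phase input consisting of jobs of two sizes only: \emph{small} jobs of size $1/3 + \eps$ and \emph{large} jobs of size $1-2\eps$, for an arbitrarily small fixed $\eps>0$. Phase~1 issues $2m$ small jobs. For each schedule $S_k$ maintained by ${\cal A}$, define $r_k$ to be the number of machines that are still empty after phase~1 has been processed; this captures how much room $S_k$ has reserved for future large jobs, at the cost of overloading the machines it actually used with small jobs. Phase~2 then issues $r$ large jobs, where $r\in\{0,1,\dots,\lfloor m/3\rfloor\}$ is chosen adaptively after seeing the $r_k$'s.

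The heart of the argument is a per-schedule dichotomy. I would show that $S_k$ attains ratio strictly below $4/3$ only when $r$ matches $r_k$. On the opt side, for any $r\leq\lfloor m/3\rfloor$ one may place each large job alone on its own machine and distribute the $2m$ small jobs across the remaining $m-r\geq 2m/3$ machines with at most three small jobs per machine, giving $\opt(\sigma)\leq 1+O(\eps)$. On the online side, when $r>r_k$ schedule $S_k$ is forced to put at least one large job on a machine that already carries $\geq 2$ small jobs, producing load at least $2/3+2\eps + 1-2\eps = 5/3$, hence ratio above $4/3$. Conversely, when $r<r_k$ schedule $S_k$ already stacked at least three small jobs on some active machine in phase~1, incurring load $\geq 1+3\eps$ against an opt close to $2/3+2\eps$ for a sequence with few enough large jobs, again yielding ratio above $4/3$.

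With this dichotomy in hand, the adversary picks $r\in\{0,1,\dots,\lfloor m/3\rfloor\}$ with $r\neq r_k$ for every $k$. Since this set has $\lfloor m/3\rfloor + 1$ elements while there are only $l\leq\lfloor m/3\rfloor$ schedules, a pigeonhole argument guarantees the existence of such an $r$. For the resulting sequence every schedule has final makespan at least $(4/3)\opt(\sigma)$, contradicting $\rho<4/3$. The hardest step will be making the per-schedule dichotomy uniform: the single parameter $r_k$ may not capture all relevant online behaviour (e.g.\ a schedule could partially load several machines rather than cleanly reserve some), so one has to refine the definition of each schedule's phase-1 commitment and verify, through a careful case analysis anchored to the exact values $1/3+\eps$ and $1-2\eps$, that mismatch between $r$ and $r_k$ yields ratio \emph{strictly} above $4/3$ rather than merely $4/3$ in the limit. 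Handling schedules whose $r_k$ exceeds $\lfloor m/3\rfloor$ requires a separate observation that such schedules are already non-competitive after phase~1 alone, so the adversary may simply stop.
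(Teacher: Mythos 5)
There is a genuine gap, and it sits exactly where you flagged it: the single parameter $r_k$ (number of empty machines after phase~1) does not determine whether a schedule survives, and the claimed dichotomy is false in both directions. Concretely, take $m$ even and a schedule that places one small job on each of $m/2$ machines and three small jobs on each of the other $m/2$ machines (this uses all $2m$ small jobs and has $r_k=0$). For every $r$ with $1\leq r\leq \lfloor m/3\rfloor$ this schedule can pair each incoming large job with a machine holding a single small job, reaching load $1/3+\eps+1-2\eps = 4/3-\eps$, while $\opt = 1+3\eps$ (for $r\geq 1$, any offline schedule must also put three small jobs on some machine, since $2m$ small jobs cannot be spread two-per-machine over $m-r$ machines). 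So this schedule beats $4/3$ for every $r\neq r_k$, refuting the ``$r>r_k$ forces a large job onto a machine with two small jobs'' claim. The converse branch also fails: when $r<r_k$ and $r\geq 1$, the online machine carrying three small jobs has load $1+3\eps$, but $\opt$ is likewise $1+3\eps$, not $2/3+2\eps$, so no ratio above $4/3$ is obtained. The underlying problem is that with $2m$ small jobs the family of ``survivable'' phase-1 configurations is parametrized by a full profile (how many machines hold $0,1,2,3$ small jobs), not by one number, and a scalar response $r$ cannot separate them.

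The paper's proof repairs both defects. It issues only $m$ jobs of size exactly $1/3$ and pigeonholes over the complete profiles $(m_1,m_3)$ with $m_1+3m_3=m$ (machines holding one, respectively three, jobs); there are exactly $\lfloor m/3\rfloor+1$ such profiles, so some profile $(m_1^*,m_3^*)$ is realized by no schedule. Phase~2 then consists of the jobs $1-\ell^*(j)$ complementing the sorted loads of the missing target schedule $S^*$ (i.e.\ jobs of sizes $1$ and $2/3$, each exceeding $1/2$, so stacking two of them already costs $4/3$), and $\opt=1$ exactly. The key step you are missing is the sorted-load comparison: for any online schedule $S$ that is not an $(m_1^*,m_3^*)$-schedule there is a position $j_0$ with $\ell(j_0)>\ell^*(j_0)$ in the sorted load vectors, hence $\ell(j_0)\geq \ell^*(j_0)+1/3$; the first $j_0$ phase-2 jobs each have size at least $1-\ell^*(j_0)$, and since only $j_0-1$ machines of $S$ have load below $\ell(j_0)$, one of these jobs lands on a machine of load at least $\ell(j_0)$, giving $\ell(j_0)+1-\ell^*(j_0)\geq 4/3$. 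This argument handles arbitrary mixed configurations uniformly, which is precisely what your case analysis anchored to $r_k$ cannot do. To salvage your plan you would essentially have to reinvent this profile-plus-majorization mechanism.
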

\begin{proof}
Let ${\cal A}$ be any deterministic online algorithm for \MPS\ or \MPSO\ that maintains at most $\lfloor m/3\rfloor$
schedules. We show that ${\cal A}$'s competitive ratio is at least $4/3$. To this end we construct an 
adversarial job sequence $\sigma$ such that each schedule maintained by ${\cal A}$ has a makespan
of at least $4/3\cdot \opt(\sigma)$. 

The job sequence $\sigma$ is composed of two subsequences $\sigma_1$ and $\sigma_2$, i.\,e.\
$\sigma = \sigma_1 \sigma_2$. Subsequence $\sigma_1$ consists of $m$ jobs of processing time
$1/3$ each. Subsequence $\sigma_2$ will consist of jobs having a processing time of either
2/3 or 1. The exact number of these jobs depends on the schedules constructed by ${\cal A}$ and 
will be determined later.

Consider the schedules that ${\cal A}$ may have built after all jobs of $\sigma_1$ have been
assigned. Each such schedule contains $m$ jobs of processing time 1/3. For the moment we concentrate on 
schedules in which each machine contains either zero, one or three jobs, i.\,e.\ there exists no machine 
containing two or more than three jobs. Each such schedule $S$ can be represented by a pair $(m_1,m_3)$, 
where $m_1$ denotes the number of machines containing exactly one job and $m_3$ is the number of machines 
containing three jobs. Here $m_1$ and $m_3$ are non-negative integers such that $m_1 + 3m_3 = m$. Let
$P = \{(m_1,m_3) \mid m_1,m_3\in \mathbb{N}_0 \ \mbox{and}\ m_1+3m_3 = m\}$ be the set of all
these pairs. Set $P$ has $\lfloor m/3\rfloor +1$ elements because $m_3$ can take any value
between~0 and $\lfloor m/3\rfloor$ and $m_1 = m - 3m_3$. Let $S$ be an arbitrary schedule 
containing $m$ jobs of processing time 1/3 and $(m_1,m_3)\in P$. We say that $S$ is an 
{\em $(m_1,m_3)$-schedule} if the number of machines containing exactly one job equals $m_1$ and 
the number of machines containing exactly three jobs equals $m_3$. 

Let ${\cal S}$ be the set of schedules constructed by ${\cal A}$ when the entire subsequence 
$\sigma_1$ has arrived. By assumption ${\cal A}$ maintains at most $\lfloor m/3\rfloor$ schedules, 
i.\,e.\ $|{\cal S}| \leq \lfloor m/3\rfloor$. Hence there must exist a pair $(m_1^*,m_3^*)\in P$
such that no schedule of ${\cal S}$ is an $(m_1^*,m_3^*)$-schedule. On the other hand, let ${\cal S}^*$ 
be an $(m_1^*,m_3^*)$-schedule. In ${\cal S}^*$ we number the machines in order of non-decreasing load
such that $\ell^*(1) \leq \ldots \leq \ell^*(m)$. Schedule ${\cal S}^*$ contains $m - m_3^*$ machines
with a load smaller than~1 and, in particular, $m-m_1^*-m_3^*$ empty machines. 

Now the subsequence $\sigma_2$ consists of $m-m_3^*$ jobs, where the $j$-th job has a processing time of
$1-\ell^*(j)$, for $j=1,\ldots, m-m_3^*$. Hence $\sigma_2$ contains $m-m_1^*-m_3^*$ jobs of
processing time~1 followed by $m_1^*$ jobs of processing time~$2/3$. Obviously, the makespan of an 
optimal schedule for $\sigma$ is~1: The jobs of $\sigma_1$ are sequenced so that an $(m_1^*,m_3^*)$-schedule 
is obtained. Again, after $\sigma_1$ has arrived, the machines are numbered in order of non-decreasing
load. While $\sigma_2$ arrives, the $j$-th job is assigned to machine $M_j$, having a load of $\ell^*(j)$,
for $j=1,\ldots, m-m_3^*$. 

In the remainder of this proof we consider any schedule $S\in {\cal S}$ and show that after $\sigma_2$ 
has been sequenced, the resulting makespan is at least 4/3. This establishes the theorem.
So let $S \in {\cal S}$ be any schedule and recall that $S$ contains $m$ jobs of processing time 1/3
each. If in $S$ there exists a machine that contains at least four of these jobs, then the makespan
is already 4/3 and there is nothing to show. Therefore, we restrict ourselves to the case that
every machine in $S$ contains at most three jobs. Again we number the machines in $S$ in order of
non-decreasing load so that $\ell(1) \leq \ldots \leq \ell(m)$. Consider the $(m_1^*,m_3^*)$-schedule
${\cal S}^*$ in which the machines loads satisfy $\ell^*(1) \leq \ldots \leq \ell^*(m)$. There must
exist a machine $M_{j_0}$, $1\leq j_0 \leq m$, such that $\ell(j_0) > \ell^*(j_0)$: For, if  
$\ell(j_0) \leq \ell^*(j_0)$ held for all $j= 1,\ldots, m$, then $\ell(j_0) = \ell^*(j_0)$ for all 
$j= 1,\ldots, m$ because $S$ and $S^*$ both contain jobs with a total processing time of $m/3$. Thus $S$ would
be an $(m_1^*,m_3^*)$-schedule and we obtain a contradiction. The last $m_3^*$ machines in $S^*$ 
have a load of~1. It follows that $j_0 \leq m-m_3^*$ because otherwise $M_{j_0}$ in $S$ contained
at least four jobs. The property $\ell(j_0) > \ell^*(j_0)$ implies $\ell(j_0) \geq \ell^*(j_0) +1/3$
because $S$ and $S^*$ only contain jobs of processing time $1/3$.

We finally show that sequencing of $\sigma_2$ leads to a makespan of at least $4/3$ in $S$. If
${\cal A}$ assigns two jobs of $\sigma_2$ to the same machine, then the resulting machine load is
at least 4/3 because each job of $\sigma_2$ has a processing time of at least $2/3$. So assume that
${\cal A}$ assigns the jobs of $\sigma_2$ to different machines. The first $j_0$ jobs of $\sigma_2$ each
have a processing time of at least $1-\ell^*(j_0)$ because the jobs arrive in order of non-increasing
processing times. In $S$ there exist at most $j_0-1$ machines having a load strictly smaller than
$\ell(j_0)$. Hence, after the first $j_0$ jobs have been scheduled in $S$, there exists a machine
having a load of at least $\ell(j_0) +1 - \ell^*(j_0) \geq \ell^*(j_0) +1/3+1 - \ell^*(j_0) = 4/3$.
This concludes the proof. \hspace*{\fill}{$\Box$}
\end{proof}

The next theorem gives a lower bound on the number of schedules required by a $(1+\eps)$-competitive algorithm,
where $0<\eps < 1/4$. It implies that, for any fixed $\eps$, the number asymptotically depends on 
$m^{\Omega(1/\eps)}$, as $m$ increases. For instance, any algorithm with a competitive ratio
smaller than $1+{1/ 12}$ requires $\Omega(m^2)$ schedules.  Any algorithm with a competitive ratio
smaller than $1+{1/ 16}$ needs $\Omega(m^3)$ schedules.
\begin{theorem}\label{th:lb2}
Let ${\cal A}$ be a deterministic online algorithm for MPS or MPS$_{\rm opt}$. If ${\cal A}$ attains a competitive 
ratio smaller than $1+\eps$, where $0<\eps \leq 1/4$, then it must maintain at least ${m'+h-1 \choose h-1}$
schedules, where $m' = \lfloor m/2 \rfloor$ and $h = \lfloor 1/(4\eps)\rfloor$. The binomial coefficient
increases as $\eps$ decreases and is at least $\Omega((\eps m)^{\lfloor 1/(4\eps)\rfloor-1/2}/\sqrt{m})$.
\end{theorem}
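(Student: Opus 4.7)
I would mirror the strategy of Theorem~\ref{th:lb1}: build an adversarial sequence $\sigma = \sigma_1\sigma_2$ such that $\sigma_1$ forces every online algorithm to commit, on its various schedules, to a ``configuration'' of the first $m' = \lfloor m/2\rfloor$ machines, while $\sigma_2$ is chosen as a function of which configuration the algorithm missed. Fix $h = \lfloor 1/(4\eps)\rfloor$. The configurations are indexed by weak compositions $(a_0,\ldots,a_{h-1})$ of $m'$ into $h$ parts: the $a_i$ active machines of ``type $i$'' each receive a load of $i/(2h)$ from $\sigma_1$ (so the $h$ types are discrete levels $0, 1/(2h), \ldots, (h-1)/(2h)$). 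The remaining $m-m'$ machines will stay empty during $\sigma_1$. The number of such compositions is exactly ${m'+h-1 \choose h-1}$.

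\textbf{Design of $\sigma_1$ and $\sigma_2$.} I would specify $\sigma_1$ as a sequence of jobs all of size $1/(2h)$ whose total processing volume equals $\sum_i i a_i^\ast / (2h)$ for the \emph{eventual} adversarial target, together with enough ``filler'' small jobs so that in the optimum schedule the levels $i/(2h)$ sit on $a_i^\ast$ machines (for $i=0,\ldots,h-1$) and the other $m-m'$ machines remain empty at the end of $\sigma_1$. Then $\sigma_2$ delivers, for each type $i$, exactly $a_i^\ast$ ``completion'' jobs of size $1-i/(2h)$, followed by $m - m'$ jobs of size $1$. By sending $\sigma_2$ in order of \emph{non-increasing} processing time, the optimum schedule is to pair each type-$i$ active machine with the matching completion job (load exactly $1$), and route each unit job onto one of the still-empty machines. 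Hence $\opt(\sigma)=1$.

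\textbf{Pigeonhole and the gap.} Since the algorithm maintains fewer than ${m'+h-1 \choose h-1}$ schedules, the adversary can choose $(a_0^\ast,\ldots,a_{h-1}^\ast)$ so that \emph{no} schedule in the algorithm's set ends $\sigma_1$ with the multiset of active-machine loads $\{i/(2h) : \text{type } i\text{ with multiplicity } a_i^\ast\}$. Thus in every maintained schedule, either some machine exceeds target level $(h-1)/(2h)$ already after $\sigma_1$ (then $\sigma_2$ only makes things worse), or the multiset of active loads differs from the target by at least one ``slot,'' i.e.\ some machine is high by $\geq 1/(2h)$ relative to the target. Since completion jobs of $\sigma_2$ are delivered largest-first and pairing must occur (or two unit jobs share a machine, immediately giving load $\geq 2$), a standard exchange argument shows that some machine in every algorithm schedule finishes with load at least $1 + 1/(2h) \geq 1 + 2\eps > 1 + \eps$, contradicting $(1+\eps)$-competitiveness.

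\textbf{Asymptotic bound and main obstacle.} The closed-form estimate $\binom{m'+h-1}{h-1} \geq \Omega((\eps m)^{h-1/2}/\sqrt m)$ follows from Stirling's approximation $\binom{N}{k} \geq (N/k)^k/\sqrt{2\pi k}$ with $N = m'+h-1 = \Theta(m)$ and $k = h-1 = \Theta(1/\eps)$, together with $1/h = \Theta(\eps)$. The main obstacle is justifying the exchange argument cleanly: one must argue that \emph{any} deviation of the algorithm's active-load multiset from the target costs at least $1/(2h)$ on some machine after $\sigma_2$, ruling out clever reassignments in phase~$2$ that might amortize a small surplus on one machine against a deficit elsewhere. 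The fact that $\sigma_2$'s completion jobs come in non-increasing order and must each go on some machine, combined with the granularity of the discrete levels $i/(2h)$ and the unit-size jobs that pin down the remaining $m-m'$ machines, is what forces the claimed $1+2\eps$ overshoot.
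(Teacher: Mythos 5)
Your high-level strategy (pigeonhole over ``configurations'' after $\sigma_1$, followed by a completion sequence $\sigma_2$ chosen to punish the missed configuration) is indeed the one the paper uses. But there is a genuine gap in your design of $\sigma_1$. You let $\sigma_1$ consist of jobs of size $1/(2h)$ \emph{whose total volume equals} $\sum_i i\,a_i^\ast/(2h)$ ``for the eventual adversarial target.'' This is circular: $a^\ast$ is the composition that the algorithm's schedule set happens to miss, which can only be determined \emph{after} the algorithm has processed $\sigma_1$, but $\sigma_1$ itself must be fixed before the algorithm responds. Worse, the set of weak compositions $(a_0,\ldots,a_{h-1})$ of $m'$ into $h$ parts does \emph{not} share a common total volume $\sum_i i\,a_i/(2h)$ (compare $(m',0,\ldots,0)$ and $(0,\ldots,0,m')$), so no single $\sigma_1$ can simultaneously be consistent with all of them, and the pigeonhole over the whole set of compositions is not available. ``Filler'' jobs do not repair this: whatever their sizes, the number and placement of jobs in a target configuration are rigid once job sizes are equal, so the total load of a configuration is determined by it.

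The paper's proof avoids exactly this issue. It fixes $\sigma_1$ once and for all as $mh$ jobs of size $\eps' = 1/(4h)$ (total volume $m/4$, independent of anything the algorithm does), and parametrizes configurations by vectors $\vec m = (m_0,\ldots,m_{2h})$ that (a) sum to $m$ and (b) satisfy the \emph{volume constraint} $(1/\eps')m_{2h} + \sum_{i=1}^{2h-1} i\,m_i = mh$. Because the constraint is explicit, all configurations in $M$ are realizable from the same $\sigma_1$, so the pigeonhole goes through. The price is that $|M|$ is no longer obviously $\binom{m'+h-1}{h-1}$; the paper establishes $|M| \geq \binom{m'+h-1}{h-1}$ by constructing an auxiliary set $M'$ of vectors whose entries lie in a symmetric ``window'' and mapping $M'$ injectively into $M$ in a volume-preserving way. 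You would need a step like this if you want to keep the clean $\binom{m'+h-1}{h-1}$ bound. Finally, the paper's overshoot argument does not need a delicate exchange argument: since all of $\sigma_1$'s jobs have the same size, sorted loads determine the configuration, so $S$ not being an $\vec m^*$-schedule gives a position $j_0$ with $\ell(j_0) \geq \ell^*(j_0)+\eps'$; feeding the $\sigma_2$ jobs largest-first and observing that at most $j_0-1$ machines are below $\ell(j_0)$ yields a machine of load $\geq 1+\eps'$. This is the clean version of the ``exchange'' you flagged as your main obstacle, but it only works because the configuration space is constrained by the fixed volume of $\sigma_1$.
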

\begin{proof}
We extend the proof of Theorem~\ref{th:lb1}. Let $0<\eps \leq 1/4$. Furthermore, let $m'$ and $h$ be defined as
in the theorem. There holds $h\geq 1$. Let $\eps' = 1/(4h)$ and note that $\eps' \geq \eps$. We will define a set
$M$ whose cardinality is at least ${m'+h-1 \choose h-1}$, and show that if ${\cal A}$ maintains less than
$|M|$ schedules, then its competitive ratio is at least $1+\eps'$.

We specify a job sequence $\sigma$ and first assume that $m$ is even. Later we will describe how to adapt
$\sigma$ if $m$ is odd. Again $\sigma$ is composed of two partial sequences $\sigma_1$ and $\sigma_2$ so
that $\sigma = \sigma_1\sigma_2$. Subsequence $\sigma_1$ consists of $mh$ jobs of processing time $\eps'$ 
each. Subsequence $\sigma_2$ depends on the schedules constructed by ${\cal A}$ and will be specified below. 
Consider the possible schedules after $\sigma_1$ has been sequenced on the $m$ machines. We restrict
ourselves to schedules having the following property: Each machine has a load of exactly~1 or a load
that is at most $1/2 - \eps'$.  Observe that each machine of load~1 contains $1/\eps'$ jobs. Each machine
of load at most $1/2 - \eps'$ contains up to $2h-1$ jobs because $(2h-1)\eps' = 2h/(4h) -\eps' = 
1/2 -\eps'$. Therefore any schedule with the stated property can be described by a vector
$\vec{m} = (m_0, \ldots, m_{2h})$, where $m_{2h}$ is the number of machines having a load of~1 and
$m_i$ is the number of machines containing exactly $i$ jobs, for $i=0, \ldots, 2h-1$. The vector $\vec{m}$
satisfies $\sum_{i=0}^{2h} m_i = m $ and $(1/\eps')m_{2h} + \sum_{i=1}^{2h-1} i m_i = mh$. The last
equation specifies the constraint that the schedule contains $mh$ jobs. Let $M$ be the set of all these
vectors, i.\,e.\
\begin{eqnarray*}
\lefteqn{\textstyle{M = \{(m_0, \ldots, m_{2h}) \in \mathbb{N}_0^{2h+1}\mid  \sum_{i=0}^{2h} m_i = m \ \ \mbox{and} }}\\
& & \hspace*{4.6cm}\textstyle{(1/\eps')m_{2h} + \sum_{i=1}^{2h-1} i m_i = mh\}.}
\end{eqnarray*}
We remark that each $\vec{m}\in M$ uniquely identifies one schedule with our desired property. Let $S$ be
any schedule containing exactly $mh$ jobs of processing time $\eps'$ and $\vec{m} = (m_0, \ldots, m_{2h}) \in M$.
We say that $S$ is an {\em $\vec{m}$-schedule\/} if in $S$ there exist $m_{2h}$ machines of load~1 and 
$m_i$ machines containing exactly $i$ jobs, for $i=0, \ldots, 2h-1$. 

Now suppose that ${\cal A}$ maintains less than $|M|$ schedules. Let ${\cal S}$ be the set of schedules constructed
by ${\cal A}$ after all jobs of $\sigma_1$ have arrived. Since $|{\cal S}| <|M|$ there must exist an 
$\vec{m}^* = (m_0^*, \ldots, m_{2h}^*) \in M$ such that no schedule of ${\cal S}$ is an $\vec{m}^*$-schedule.
Let $S^*$ be an $\vec{m}^*$-schedule in which machines are numbered in order of non-decreasing load such that
$\ell^*(1) \leq \ldots \leq \ell^*(m)$. Subsequence $\sigma_2$ consists of $m-m_{2h}^*$ jobs, where job $j$
has a processing time of $1-\ell^*(j)$, for $j=1, \ldots,m-m_{2h}^*$. Hence $\sigma_2$ consists of $m_i^*$ jobs
of processing time $1-i\eps'$, for $i=0, \ldots, 2h-1$. These jobs arrive in order of non-increasing processing
time. Each job has a processing time of at least $1/2+\eps'$ because $1-(2h-1)\eps' = 1 - (2h/4h-\eps') = 1/2+\eps'$. 
The makespan of an optimal schedule for $\sigma$ is 1. The jobs of $\sigma_1$ are sequenced so that an 
$\vec{m}^*$-schedule is obtained. Machines are again numbered in order of non-decreasing load. Then, while
the jobs of $\sigma_2$ arrive, the $j$-th job of the subsequence is assigned to machine $M_j$ in $S^*$,
$1\leq j \leq m-m_{2h}^*$.

We next show that after ${\cal A}$ has sequenced $\sigma_2$, each of its schedules has a makepan of at least
$1+\eps'$. So consider any $S\in {\cal S}$ and, as always, number the machines in order of non-decreasing load
such that $\ell(1)\leq \ldots \leq \ell(m)$. If in $S$ there exists a machine that has a load of at least 
$1+\eps'$ and hence contains at least $1/\eps'+1$ jobs, then there is nothing to show. So assume that each
machine in $S$ contains at most $1/\eps'$ jobs and thus has a load of at most~1. We study the assignment
of the jobs of $\sigma_2$ to $S$. If ${\cal A}$ places two jobs of $\sigma_2$ on the same machine, then we are
done because each job has a processing time of at least $1/2+\eps'$. Therefore we focus on the case that
${\cal A}$ assigns the jobs of $\sigma_2$ to different machines. 

Schedules $S$ and $S^*$ both contain jobs of total processing time $mh\eps'$. Since $S$ is not an 
$\vec{m}^*$-schedule there must exist a $j_0$, $1\leq j_0\leq m$, such that $\ell(j_0) > \ell^*(j_0)$
and hence $\ell(j_0) \geq \ell^*(j_0)+\eps'$. Each machine in $S$ has a load of at most~1 while the
last $m-m_{2h}^*$ machines in $S^*$ have a load of exactly~1. This implies $j_0\leq m-m_{2h}^*$.
The first $j_0$ jobs of $\sigma_2$ each have a processing time of at least $1-\ell^*(j_0)$. However,
there exist at most $j_0-1$ machines in $S$ having a load strictly smaller than $\ell^*(j_0)$. Hence
after ${\cal A}$ has sequenced the first $j_0$ jobs of $\sigma_2$ there must exist a machine in
$S$ with a load of at least $\ell(j_0) +1 - \ell^*(j_0) \geq \ell^*(j_0) +\eps' + 1 -\ell^*(j_0)
= 1+\eps'$. 

So far we have assumed that $m$ is even. If $m$ is odd, we can easily modify $\sigma$. The first job 
of $\sigma$ is a job of processing time~1. Then $\sigma_1$ and $\sigma_2$ follow. These subsequences
are defined as above, where $m$ is replaced by the even number $m-1$. In this case 
\begin{eqnarray*}
\lefteqn{\textstyle{M = \{(m_0, \ldots, m_{2h}) \in \mathbb{N}_0^{2h-1} \mid  \sum_{i=0}^{2h} m_i = m-1 \ \ \mbox{and}}}\\
& & \hspace{4.6cm} \textstyle{(1/\eps')m_{2h} + \sum_{i=1}^{2h-1} i m_i = (m-1)h\}.}
\end{eqnarray*}
The analysis presented above carries over because the first job of $\sigma$, having a processing time of 1,
must be scheduled on a separate machine and cannot be combined with any job of $\sigma_1$ or $\sigma_2$ if a competitive 
ratio smaller than $1+\eps'$ is to be attained. 

We next lower bound the cardinality of $M$. Again we first focus on the case that $m$ is even. In the definition of
$M$ the critical constraint is $(1/\eps')m_{2h} + \sum_{i=1}^{2h-1} i m_i = mh$, which implies that not every
vector of $\{0, \ldots, m\}^{2h+1}$ represents a schedule that can be built of $mh$ jobs. In particular, the 
vector $(0, \ldots, 0,m)$ of length $2h+1$ would require $m/\eps' = 4h$ jobs. Therefore, we introduce a set
$M'$ and show $|M'| \leq |M|$. Set $M'$ contains vectors of length $2h+1$ in which the first $h+1$ entries
as well as the last one are equal to~0. The other entries sum to at most $m/2$, i.\,e.\
$$\textstyle{M' = \{(0,\ldots, 0,m'_{h+1}, \ldots, m'_{2h-1},0) \in \mathbb{N}_0^{2h+1} \mid \sum_{i=1}^{h-1} m'_{h+i} \leq m/2\}.}$$
We show that each $\vec{m}'\in M'$ can be mapped to a $\vec{m}\in M$. The mapping has the property that
any two different vectors of $M'$ are mapped to different vectors of $M$. This implies $|M'| \leq |M|$.

Consider any $\vec{m}' = (0,\ldots, 0,m'_{h+1}, \ldots, m'_{2h-1},0) \in M'$. Let $\vec{m} = (m_0, \ldots, m_{2h})$
be defined as follows. For $i= h+1, \ldots, 2h$, let $m_i = m'_i$. For $i=0, \ldots, h-1$, let $m_i = m_{2h-i}$.
Finally, let $m_h = m -2\sum_{i=1}^{h-1} m_i$. Note that $m_0=m_{2h} =0$. We next show that $\vec{m}\in M$. 
There holds $\sum_{i=0}^{2h} m_i = \sum_{i=1}^{2h-1} m_i = 2\sum_{i=1}^{h-1} m_i +m_h = m$. Furthermore,
\begin{eqnarray*}
m_{2h}/\eps' + \sum_{i=0}^{2h-1} im_i & = & \sum_{i=1}^{2h-1} im_i = \sum_{i=1}^{h-1} (i+2h-i) m_i +hm_h \\
& = & 2h\sum_{i=1}^{h-1}m_i + h(m-2\sum_{i=1}^{h-1} m_i) = mh.
\end{eqnarray*}
It follows, as desired, $\vec{m}\in M$. Note that the last $h$ entries of $\vec{m}$ are identical to the last $h$ entries 
of $\vec{m}'$. Hence no two vectors of $M'$ that differ in at least one entry are mapped to the same vector of
$M$. Hence $|M'| \leq |M|$. If the number $m$ of machines is odd, then in the definition of $M'$ the entries of a vector 
sum to at most $(m-1)/2$. The rest of the construction and analysis is the same. Thus, for a general number $m$ of machines
$$\textstyle{M' = \{(0,\ldots, 0,m'_{h+1}, \ldots, m'_{2h-1},0) \mid  m'_i\in \mathbb{N}_0 \ \ \mbox{and} \ \
\sum_{i=1}^{h-1} m'_{h+i} \leq \lfloor m/2\rfloor\}.}$$
This set contains exactly ${m'+h-1 \choose h-1}$ elements, where again $m' = \lfloor m/2\rfloor$. In the remainder
of this proof we lower bound this binomial coefficient. 

There holds
$\sqrt{2\pi e} (k/e)^{k+1/2} \leq  k! \leq 2 \sqrt{2\pi e} (k/e)^{k+1/2}$
 for any $k\in \mathbb{N}$ by Stir\-ling's approximation~\cite{F}. Hence
\begin{eqnarray*}
{m'+h-1\choose h-1} &=& {(m'+h-1)!\over m'!(h-1)!} \  \geq {(m'+h-1)^{m'+h-1/2}\over 4\sqrt{2\pi} (m')^{m'+1/2} (h-1)^{h-1/2}}\\
&=& {1\over 4\sqrt{2\pi m'}} \left(1+{h-1\over m'}\right)^{m'} \left(1+{m'\over h-1}\right)^{h-1/2} \\
& > & {1\over 4\sqrt{2\pi m'}} \left(1+{m/2-1/2 \over 1/(4\eps)}\right)^{h-1/2}.
\end{eqnarray*}
The last expression is $\Omega((\eps m)^{\lfloor 1/(4\eps)\rfloor -1/2}/\sqrt{m})$.\hspace*{\fill}{$\Box$}
\end{proof}

\end{document}